\definecolor{egg}{rgb}{.98,.97,.92}
\definecolor{astroorange}{rgb}{1,.93,.79}
\definecolor{darkorange}{rgb}{1,.89,.6}
\definecolor{dullblue}{rgb}{.29,.47,.77}
\definecolor{grayblue}{rgb}{.98,.98,.98}
\definecolor{fadedblue}{rgb}{.78,.86,.92}
\definecolor{tiffanyblue}{rgb}{.96,1,1}
\definecolor{grayish}{rgb}{.93,.93,.97}
\definecolor{charcoal}{rgb}{.247,.259,.27}
\definecolor{evergreen}{rgb}{.7725,.858,.7647}
\definecolor{dullred}{rgb}{.929,.498,.598}
\definecolor{lavender}{rgb}{.8,.741,.85}
\newcommand{\tr}[1]{\mathrm{Tr}\left\{#1\right\}}
\newcommand{\ptr}[2]{\mathrm{Tr}_{#1}\left\{#2\right\}}
\newcommand{\av}[1]{\underset{\tiny{#1}}{\mathbb{E}}}
\newcommand{\norm}[1]{\left|\left| #1 \right|\right|}
\newcommand{\abs}[1]{\left| #1 \right|}
\newcommand{\OTOC}{\mathrm{OTOC}}
\newcommand{\xpos}{0} 
\newcommand{\ypos}{0} 
\newcommand{\xrel}{1} 
\newcommand{\yrel}{.5} 
\newcommand{\xglobalshift}{0} 
\newcommand{\yglobalshift}{0} 
\newcommand{\height}{2em} 
\newcommand{\width}{2em} 
\newcommand{\name}{} 
\newcommand{\nodenum}{0} 
\newcommand{\heightsingle}{2em} 
\newcommand{\heightdouble}{4.6em} 
\newcommand{\widthsingle}{2em} 
\newcommand{\rowspace}{2*\yrel} 
\newtheorem{theorem}{Theorem}
\newtheorem{lemma}{Lemma}
\newtheorem{definition}{Definition}
\newtheorem{example}{Example}
\begin{document}
\title{Resource theory of quantum scrambling}
\author{Roy J. Garcia}
\email{roygarcia@g.harvard.edu}
\affiliation{Harvard University, Cambridge, Massachusetts 02138, USA}

\author{Kaifeng Bu}
\email{kfbu@g.harvard.edu}
\affiliation{Harvard University, Cambridge, Massachusetts 02138, USA}

\author{Arthur Jaffe}
\email{jaffe@g.harvard.edu}
\affiliation{Harvard University, Cambridge, Massachusetts 02138, USA}

\date{\today}

\begin{abstract}

Quantum scrambling refers to the spread of local quantum information into the many degrees of freedom of a quantum system. 
In this work, we introduce a resource theory of scrambling which incorporates two mechanisms, ``entanglement scrambling'' and ``magic scrambling". We introduce two resource monotones called the Pauli growth and the OTOC (out-of-time-ordered correlator) magic for these two mechanisms, respectively. We use our resource theory to explain recent experimental observations of magic. We also show that both resource monotones can be used to bound the decoding fidelity in Yoshida's black hole decoding protocol. These applications provide an operational interpretation of the resource monotones defined in this work.

\end{abstract}

\maketitle

\section{Introduction}
Quantum scrambling describes the spread of local information in a quantum system. This field has flourished by connecting diverse areas, including  quantum many-body physics~\cite{PhysRevLett.70.3339, Kitaev2015, Nandkishore_2015}, black hole physics~\cite{Hayden_2007, Sekino_2008, Lashkari_2013, Shenker_2014, Maldacena_2016}, and quantum information~\cite{Hayden_2007}. It has become  a prevalent ingredient in many  information processing problems found in: quantum machine learning \cite{Holmes_2021, Garcia2022Barren,  PhysRevLett.124.200504, PhysRevResearch.3.L032057, Garcia_2022}, shadow tomography with classical shadows \cite{Huang_2020, PhysRevResearch.4.013054, Hu2021, Bu2022Classical, PhysRevResearch.3.033155,McGinley_2022}, quantum error correction \cite{Brown2012,Choi_2020}, encryption \cite{Ruck2020}, and emergent quantum state designs \cite{PhysRevLett.128.060601}. For instance, scrambling dynamics is  used in~\cite{PhysRevLett.128.060601} to generate an ensemble of quantum states that  is indistinguishable from the set of all uniformly-random states. 

To quantify scrambling in a quantum system, several measures have been proposed, such as
the average Pauli weight~\cite{Chen2018Strongly, PhysRevE.99.052212, PhysRevX.8.031057}, the out-of-time-ordered correlator (OTOC)~\cite{Aleiner_2016,Larkin1969, Nahum_2017, PhysRevX.8.031057, PhysRevLett.115.131603,Hosur_2016,PhysRevD.96.065005, PRXQuantum.2.020339,PhysRevX.8.021014,von_Keyserlingk_2018}, the operator entanglement entropy~\cite{PhysRevA.63.040304, Band2005, Zhou_2017}, and the tripartite mutual information~\cite{Hosur_2016,Roberts_2017}. For example, Pan et al. measured the tripartite mutual information as a signature of scrambling on a superconducting quantum processor~\cite{PhysRevLett.128.160502}. The OTOC in particular has been used to characterize many-body localization~\cite{Nandkishore_2015,PhysRevB.95.060201, Huang_2016,Chen2016, Chen_2016,Fan_2017,PhysRevB.95.054201,Nandkishore_2015} and fast scramblers, including black holes and the SYK model  \cite{Sekino_2008, Maldacena_2016, PhysRevLett.70.3339, Kitaev2015,  Belyansky_2020}.  Moreover,  two mechanisms of scrambling were investigated by measuring the average OTOC and OTOC fluctuations using Google's Sycamore quantum processor~\cite{Mi_2021}.

Exploiting advantages provided by
quantum phenomena is one of the key problems in information processing tasks. In recent years,
one theoretical framework,  called quantum resource theory \cite{HORODECKI_2012,ChitambarRMP19},  has been developed to quantify these advantages.
In a quantum resource theory, one identifies a set of free quantum states (channels) that are easy to prepare (implement). All other quantum states (channels) are considered resources and are often useful in accomplishing a particular task. A resource monotone quantifies the amount of resource in a state (channel). Using this formalism, it
was shown that many features of quantum resources are very general and can be characterized in a unified manner \cite{ChitambarRMP19,Liu2019}. Examples of resources include entanglement~\cite{Chitambar_2019}, magic~\mbox{\cite{Veitch_2014, Howard_2017,Wang_2019}}, quantum thermodynamics~\cite{Brandao2013,Brandao_secondlaws2015,RennerPRX21,Renner2022},
coherence~\cite{aberg2006quantifying, baumgratz2014quantifying,winter2016operational,bu2017maximum,Streltsov2017colloquium}, and uncomplexity~\cite{Halpern2021}, among others. 

Resource theories have been widely used to quantify advantages in operational tasks~\cite{Takagi_2019}. For instance, quantum entanglement
is an essential resource for quantum teleportation~\cite{PhysRevLett.70.1895}.  Magic,  which characterizes how far away a quantum state (gate) is from the set of stabilizer states (Clifford gates), has been used in quantum computation to establish bounds on classical simulation times~\cite{Bravyi_2016, Bravyi_2019, Howard_2017, Seddon_2021, Seddon_2019, Bu_2019, Bu_2022}. 
Despite its many applications, a resource theory of scrambling has been  lacking. Recently, Yoshida conjectured that one may exist~\cite{Yoshida2021Recovery}.


In this work, we define a resource theory of scrambling incorporating two mechanisms, ``entanglement scrambling'' and ``magic scrambling".
In entanglement scrambling, resourceful unitaries increase the support of a local Pauli operator. In magic scrambling, resourceful unitaries map a Pauli operator to a sum of multiple Pauli operators. We define resource monotones called the Pauli growth and the OTOC magic for each mechanism, respectively. We show that OTOC fluctuations bound the OTOC magic, which provides a theoretical proof of Google's experimental results~\cite{Mi_2021}. We also use the OTOC magic and the Pauli growth to bound the decoding fidelity in Yoshida's decoding protocol~\cite{YoshidaEfficient} for the Hayden-Preskill thought experiment~\cite{Hayden_2007}, in which scrambling is used to recover a quantum state thrown into a  black hole.

\section{Main results}
\subsection{Preliminaries}
Given an $n$-qudit system, the generalized $n$-qudit Pauli group is defined as ${\mathcal{P}_d^{\otimes n}={\{P_{\vec{a}}: P_{\vec{a}}=\otimes_{i=1}^n P_{a_i}\}_{\vec{a}\in \mathcal{V}_d^n}}}$, where $d$ is the local  dimension, $P_{a_i}=X^{s_i}Z^{t_i}$, ${a_i=(s_i,t_i)\in \mathcal{V}_d= \mathbb{Z}_d\otimes \mathbb{Z}_d}$ and $\vec{a}=(a_1,\ldots,a_n)$. The generalized Pauli $X$ operator is defined by ${X\ket{j}=\ket{j+1 \ (\mathrm{mod} \ d)}}$ and the generalized Pauli  $Z$ operator is defined by $Z\ket{j}=e^{2ij \pi/d}\ket{j}$. The inner product between the two $n$-qudit operators $O_1$ and $O_2$ is defined as $\langle O_1,O_2\rangle\equiv \frac{1}{d^n}\tr{O_1^\dagger O_2}$. We define the induced norm as $\norm{\cdot}_2=\sqrt{\langle \cdot, \cdot \rangle}$.

Let $O$ be an $n$-qudit operator with a norm of ${\norm{O}_2=1}$. The operator $O$ can be expanded in the Pauli basis, ${O=\sum_{\vec{a}\in\mathcal{V}^n_d}c_{\vec{a}}P_{\vec{a}}}$, where the expansion coefficients satisfy $\sum_{\vec{a}\in\mathcal{V}^n_d}\abs{c_{\vec{a}}}^2=1$. Due to this normalization condition, we define ${P_O[\vec{a}]\equiv \abs{c_{\vec{a}}}^2=\frac{1}{d^{2n}}\abs{\tr{OP_{\vec{a}}}}^2}$, which implies a probability distribution 
over $\mathcal{P}_d^{\otimes n}$.   The average Pauli weight  of $O$, also called the  influence \cite{Montanaro2008},  is 
\begin{equation}\label{eq:pw}
	W(O)=\sum_{\vec{a}\in \mathcal{V}_d^n}\abs{\vec{a}}P_{O}[\vec{a}],
\end{equation}
where $\abs{\vec{a}}$, the number of $a_j$ in $\vec{a}$ such that ${a_j\neq (0,0)}$, is the Pauli weight of the Pauli operator $P_{\vec{a}}$ \mbox{(e.g. $I\otimes X \otimes I \otimes Z$} has a Pauli weight of $2$).

\subsection{Entanglement scrambling}
We first introduce the framework for entanglement scrambling, in which free unitaries (referred to as \textit{non-entangling unitaries}) are defined as the unitaries which map any weight-1 Pauli operator by conjugation to an operator with an average Pauli weight of 1. Non-entangling unitaries are generated by swap gates and  single-qudit unitaries, as shown in~\cite{Bu2022}. The name of this mechanism is motivated by the fact that non-entangling unitaries do not increase the average of an entanglement measure over all bipartitions (see \footnote{For example, the average R\'enyi-2 entanglement entropy, $\overline{S}^{(2)}(\rho)\equiv \frac{1}{2^{n}}\sum_{A\in [n]}-\mathrm{log}\tr{\rho_A^2}$, where $\rho_A$ is the reduced state of an $n$-qudit state $\rho$ on a subsystem $A$, satisfies $\overline{S}^{(2)}(U\rho U^\dagger)=\overline{S}^{(2)}(\rho)$, where $U$ is a non-entangling unitary \cite{Bu2022}.} for an example).

We define a resource monotone called the Pauli growth based on the average Pauli weight in Eq.~\eqref{eq:pw}.

\begin{definition}\label{Lemma:PauliMonotone}
The Pauli growth of a unitary $U$ is
\begin{equation}
	G(U)\equiv \max_{\substack{O:  \norm{O}_2=1, W(O)=1,\\\tr{O}=0}}\left[W(U^\dagger O U)-1\right].
\end{equation}
\end{definition}
It is proved in Appendix~\ref{Sec:PauliGrowth} that the Pauli growth satisfies the following properties, implying that it is a resource monotone, 
\begin{enumerate}
	\item (Faithfulness) $G(V)\geq 0$ for any unitary $V$, and $G(U)=0$ iff $U$ is a non-entangling unitary,
	\item (Invariance) $G(U_1VU_2)= G(V)$ for any unitary $V$ and non-entangling unitaries $U_1$ and $U_2$.\label{Condition2} 
\end{enumerate}
Faithfulness guarantees that only resourceful unitaries generate entanglement scrambling, indicated by a positive Pauli growth. The Pauli growth measures the increase in the average Pauli weight of a weight-1 operator under unitary evolution (see Fig. \ref{Fig:Scrambling}). In other words, it quantifies operator spreading.

Scrambling in an $n$-qubit system is commonly studied by utilizing the out-of-time-ordered correlator, defined as
\begin{equation}\label{Eq:OTOCdef}
	\OTOC(U)=\langle U^\dagger P_D U P_A U^\dagger P_D U P_A \rangle,
\end{equation}
where $P_A$ and $P_D$ are Pauli operators which act non-trivially only on the subsystems $A$ and $D$, respectively. We define the notation $\langle \cdot \rangle\equiv \frac{1}{2^n}\tr{\cdot}$. The OTOC is related to commutator growth via ${\norm{[U^\dagger P_D U, P_A]}_{\mathrm{HS}}^2=2^{n+1}\left(1-\OTOC(U)\right)}$, where $\norm{\cdot}_{\mathrm{HS}}$ denotes the Hilbert-Schmidt norm. For disjoint subsystems $A$ and $D$, this commutator norm measures the spread of the support of $P_D$ to the subsystem $A$ after conjugation by $U$. A small OTOC value is traditionally considered as a signature of scrambling. In the large $n$ limit, the OTOC of $U$ and the Pauli growth satisfy the following relation  (see Appendix~\ref{Sec:ProofOTOCScram} for a proof, which is based on the results in \cite{Bu2022}):  
\begin{equation}
\begin{split}
\av{A}\av{P_A\neq I_A}\OTOC(U)\geq 1-\frac{4}{3n}\left(G(U)+1\right),
\end{split}
\end{equation}
where $D$ is the $n$-th qubit, $A$ is any other single-qubit subsystem, $\av{A}$ is the uniform average over all choices of $A$, and $\av{P_A\neq I_A}$ is the uniform average over all non-identity Pauli operators on $A$. The OTOC is hence an indicator of operator spreading.

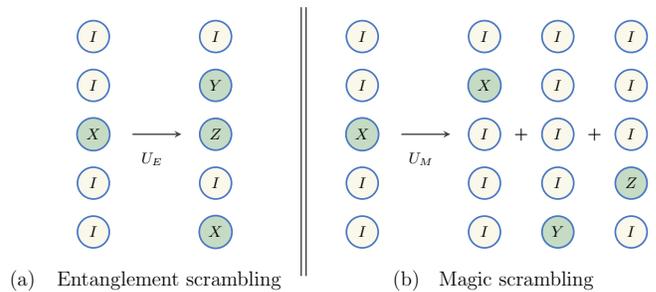
\begin{figure}[h!]
\scalebox{.65}{
\hspace{-1.5em}
\begin{tikzpicture}

\draw [thick,color=charcoal]
(2.2*\xrel,.7*\yrel-2.75*\rowspace)--(2.2*\xrel,.7*\yrel+2.75*\rowspace)
(2.3*\xrel,.7*\yrel-2.75*\rowspace)--(2.3*\xrel,.7*\yrel+2.75*\rowspace);
    
  	\renewcommand{\nodenum}{v2}
    \renewcommand{\name}{\large (a)}
	\renewcommand{\xpos}{\xglobalshift-3.5*\xrel}
    \renewcommand{\ypos}{\yglobalshift-5*\yrel}
    \renewcommand{\height}{\heightsingle}
    \renewcommand{\width}{\widthsingle}
    \node[] (\nodenum) at (\xpos,\ypos) {\name}; 

    \renewcommand{\nodenum}{v3}
    \renewcommand{\name}{\large Entanglement scrambling}
	\renewcommand{\xpos}{\xglobalshift-0.5*\xrel}
    \renewcommand{\ypos}{\yglobalshift-5*\yrel}
    \renewcommand{\height}{\heightsingle}
    \renewcommand{\width}{\widthsingle}
    \node[] (\nodenum) at (\xpos,\ypos) {\name}; 

    \renewcommand{\nodenum}{v2}
    \renewcommand{\name}{\large (b)}
	\renewcommand{\xpos}{\xglobalshift+4.35*\xrel}
    \renewcommand{\ypos}{\yglobalshift-5*\yrel}
    \renewcommand{\height}{\heightsingle}
    \renewcommand{\width}{\widthsingle}
    \node[] (\nodenum) at (\xpos,\ypos) {\name}; 
    
    \renewcommand{\nodenum}{v2}
    \renewcommand{\name}{\large Magic scrambling}
	\renewcommand{\xpos}{\xglobalshift+6.6*\xrel}
    \renewcommand{\ypos}{\yglobalshift-5*\yrel}
    \renewcommand{\height}{\heightsingle}
    \renewcommand{\width}{\widthsingle}
    \node[] (\nodenum) at (\xpos,\ypos) {\name}; 
    
\renewcommand{\xglobalshift}{-.5*\xrel} 
\renewcommand{\yglobalshift}{0} 

  	\renewcommand{\nodenum}{v2}
    \renewcommand{\name}{$I$}
	\renewcommand{\xpos}{\xglobalshift-1.55*\xrel}
    \renewcommand{\ypos}{\yglobalshift+5*\yrel}
    \renewcommand{\height}{\heightsingle}
    \renewcommand{\width}{\widthsingle}
    \node[circle,, line width=.35mm, fill=egg, rounded corners, minimum width=\width, minimum height=\height, draw=dullblue] (\nodenum) at (\xpos,\ypos) {\name}; 
    
  	\renewcommand{\nodenum}{v2}
    \renewcommand{\name}{$I$}
	\renewcommand{\xpos}{\xglobalshift-1.55*\xrel}
    \renewcommand{\ypos}{\yglobalshift+3*\yrel}
    \renewcommand{\height}{\heightsingle}
    \renewcommand{\width}{\widthsingle}
    \node[circle,, line width=.35mm, fill=egg, rounded corners, minimum width=\width, minimum height=\height, draw=dullblue] (\nodenum) at (\xpos,\ypos) {\name}; 
    
	\renewcommand{\nodenum}{v3}
    \renewcommand{\name}{$X$}
	\renewcommand{\xpos}{\xglobalshift-1.55*\xrel}
    \renewcommand{\ypos}{\yglobalshift+\yrel}
    \renewcommand{\height}{\heightsingle}
    \renewcommand{\width}{\widthsingle}
    \node[circle,, line width=.35mm, fill=evergreen, rounded corners, minimum width=\width, minimum height=\height, draw=dullblue] (\nodenum) at (\xpos,\ypos) {\name}; 
        
	\renewcommand{\nodenum}{v3}
    \renewcommand{\name}{$I$}
	\renewcommand{\xpos}{\xglobalshift-1.55*\xrel}
    \renewcommand{\ypos}{\yglobalshift+\yrel-\rowspace}
    \renewcommand{\height}{\heightsingle}
    \renewcommand{\width}{\widthsingle}
    \node[circle, line width=.35mm, fill=egg, rounded corners, minimum width=\width, minimum height=\height, draw=dullblue] (\nodenum) at (\xpos,\ypos) {\name}; 
            
	\renewcommand{\nodenum}{v3}
    \renewcommand{\name}{$I$}
	\renewcommand{\xpos}{\xglobalshift-1.55*\xrel}
    \renewcommand{\ypos}{\yglobalshift+\yrel-2*\rowspace}
    \renewcommand{\height}{\heightsingle}
    \renewcommand{\width}{\widthsingle}
    \node[circle, line width=.35mm, fill=egg, rounded corners, minimum width=\width, minimum height=\height, draw=dullblue] (\nodenum) at (\xpos,\ypos) {\name}; 
            
	\renewcommand{\nodenum}{v3}
    \renewcommand{\name}{$U_E$}
	\renewcommand{\xpos}{\xglobalshift-.35*\xrel}
    \renewcommand{\ypos}{\yglobalshift+\yrel-.5*\rowspace}
    \renewcommand{\height}{\heightsingle}
    \renewcommand{\width}{\widthsingle}
    \node[] (\nodenum) at (\xpos,\ypos) {\name}; 
    
    \draw [-stealth,thick,color=charcoal](\xglobalshift-.75*\xrel,\yglobalshift+1*\yrel) -- (\xglobalshift+.25*\xrel,\yglobalshift+\yrel);
    
\renewcommand{\xglobalshift}{2*\xrel} 
\renewcommand{\yglobalshift}{0} 

  	\renewcommand{\nodenum}{v2}
    \renewcommand{\name}{$I$}
	\renewcommand{\xpos}{\xglobalshift-1.55*\xrel}
    \renewcommand{\ypos}{\yglobalshift+5*\yrel}
    \renewcommand{\height}{\heightsingle}
    \renewcommand{\width}{\widthsingle}
    \node[circle,, line width=.35mm, fill=egg, rounded corners, minimum width=\width, minimum height=\height, draw=dullblue] (\nodenum) at (\xpos,\ypos) {\name}; 
    
  	\renewcommand{\nodenum}{v2}
    \renewcommand{\name}{$Y$}
	\renewcommand{\xpos}{\xglobalshift-1.55*\xrel}
    \renewcommand{\ypos}{\yglobalshift+3*\yrel}
    \renewcommand{\height}{\heightsingle}
    \renewcommand{\width}{\widthsingle}
    \node[circle,, line width=.35mm, fill=evergreen, rounded corners, minimum width=\width, minimum height=\height, draw=dullblue] (\nodenum) at (\xpos,\ypos) {\name}; 
    
	\renewcommand{\nodenum}{v3}
    \renewcommand{\name}{$Z$}
	\renewcommand{\xpos}{\xglobalshift-1.55*\xrel}
    \renewcommand{\ypos}{\yglobalshift+\yrel}
    \renewcommand{\height}{\heightsingle}
    \renewcommand{\width}{\widthsingle}
    \node[circle,, line width=.35mm, fill=evergreen, rounded corners, minimum width=\width, minimum height=\height, draw=dullblue] (\nodenum) at (\xpos,\ypos) {\name}; 
        
	\renewcommand{\nodenum}{v3}
    \renewcommand{\name}{$I$}
	\renewcommand{\xpos}{\xglobalshift-1.55*\xrel}
    \renewcommand{\ypos}{\yglobalshift+\yrel-\rowspace}
    \renewcommand{\height}{\heightsingle}
    \renewcommand{\width}{\widthsingle}
    \node[circle,, line width=.35mm, fill=egg, rounded corners, minimum width=\width, minimum height=\height, draw=dullblue] (\nodenum) at (\xpos,\ypos) {\name}; 
            
	\renewcommand{\nodenum}{v3}
    \renewcommand{\name}{$X$}
	\renewcommand{\xpos}{\xglobalshift-1.55*\xrel}
    \renewcommand{\ypos}{\yglobalshift+\yrel-2*\rowspace}
    \renewcommand{\height}{\heightsingle}
    \renewcommand{\width}{\widthsingle}
    \node[circle,, line width=.35mm, fill=evergreen, rounded corners, minimum width=\width, minimum height=\height, draw=dullblue] (\nodenum) at (\xpos,\ypos) {\name}; 
    
\renewcommand{\xglobalshift}{5*\xrel} 
\renewcommand{\yglobalshift}{0} 

  	\renewcommand{\nodenum}{v2}
    \renewcommand{\name}{$I$}
	\renewcommand{\xpos}{\xglobalshift-1.55*\xrel}
    \renewcommand{\ypos}{\yglobalshift+5*\yrel}
    \renewcommand{\height}{\heightsingle}
    \renewcommand{\width}{\widthsingle}
    \node[circle,, line width=.35mm, fill=egg, rounded corners, minimum width=\width, minimum height=\height, draw=dullblue] (\nodenum) at (\xpos,\ypos) {\name}; 
    
  	\renewcommand{\nodenum}{v2}
    \renewcommand{\name}{$I$}
	\renewcommand{\xpos}{\xglobalshift-1.55*\xrel}
    \renewcommand{\ypos}{\yglobalshift+3*\yrel}
    \renewcommand{\height}{\heightsingle}
    \renewcommand{\width}{\widthsingle}
    \node[circle,, line width=.35mm, fill=egg, rounded corners, minimum width=\width, minimum height=\height, draw=dullblue] (\nodenum) at (\xpos,\ypos) {\name}; 
    
	\renewcommand{\nodenum}{v3}
    \renewcommand{\name}{$X$}
	\renewcommand{\xpos}{\xglobalshift-1.55*\xrel}
    \renewcommand{\ypos}{\yglobalshift+\yrel}
    \renewcommand{\height}{\heightsingle}
    \renewcommand{\width}{\widthsingle}
    \node[circle,, line width=.35mm, fill=evergreen, rounded corners, minimum width=\width, minimum height=\height, draw=dullblue] (\nodenum) at (\xpos,\ypos) {\name}; 
        
	\renewcommand{\nodenum}{v3}
    \renewcommand{\name}{$I$}
	\renewcommand{\xpos}{\xglobalshift-1.55*\xrel}
    \renewcommand{\ypos}{\yglobalshift+\yrel-\rowspace}
    \renewcommand{\height}{\heightsingle}
    \renewcommand{\width}{\widthsingle}
    \node[circle,, line width=.35mm, fill=egg, rounded corners, minimum width=\width, minimum height=\height, draw=dullblue] (\nodenum) at (\xpos,\ypos) {\name}; 
            
	\renewcommand{\nodenum}{v3}
    \renewcommand{\name}{$I$}
	\renewcommand{\xpos}{\xglobalshift-1.55*\xrel}
    \renewcommand{\ypos}{\yglobalshift+\yrel-2*\rowspace}
    \renewcommand{\height}{\heightsingle}
    \renewcommand{\width}{\widthsingle}
    \node[circle,, line width=.35mm, fill=egg, rounded corners, minimum width=\width, minimum height=\height, draw=dullblue] (\nodenum) at (\xpos,\ypos) {\name}; 
            
	\renewcommand{\nodenum}{v3}
    \renewcommand{\name}{$U_M$}
	\renewcommand{\xpos}{\xglobalshift-.35*\xrel}
    \renewcommand{\ypos}{\yglobalshift+\yrel-.5*\rowspace}
    \renewcommand{\height}{\heightsingle}
    \renewcommand{\width}{\widthsingle}
    \node[] (\nodenum) at (\xpos,\ypos) {\name}; 
    
    \draw [-stealth,thick,color=charcoal](\xglobalshift-.75*\xrel,\yglobalshift+1*\yrel) -- (\xglobalshift+.25*\xrel,\yglobalshift+\yrel);
    
\renewcommand{\xglobalshift}{7.5*\xrel} 
\renewcommand{\yglobalshift}{0} 

  	\renewcommand{\nodenum}{v2}
    \renewcommand{\name}{$I$}
	\renewcommand{\xpos}{\xglobalshift-1.55*\xrel}
    \renewcommand{\ypos}{\yglobalshift+5*\yrel}
    \renewcommand{\height}{\heightsingle}
    \renewcommand{\width}{\widthsingle}
    \node[circle,, line width=.35mm, fill=egg, rounded corners, minimum width=\width, minimum height=\height, draw=dullblue] (\nodenum) at (\xpos,\ypos) {\name}; 
    
  	\renewcommand{\nodenum}{v2}
    \renewcommand{\name}{$X$}
	\renewcommand{\xpos}{\xglobalshift-1.55*\xrel}
    \renewcommand{\ypos}{\yglobalshift+3*\yrel}
    \renewcommand{\height}{\heightsingle}
    \renewcommand{\width}{\widthsingle}
    \node[circle,, line width=.35mm, fill=evergreen, rounded corners, minimum width=\width, minimum height=\height, draw=dullblue] (\nodenum) at (\xpos,\ypos) {\name}; 
    
	\renewcommand{\nodenum}{v3}
    \renewcommand{\name}{$I$}
	\renewcommand{\xpos}{\xglobalshift-1.55*\xrel}
    \renewcommand{\ypos}{\yglobalshift+\yrel}
    \renewcommand{\height}{\heightsingle}
    \renewcommand{\width}{\widthsingle}
    \node[circle,, line width=.35mm, fill=egg, rounded corners, minimum width=\width, minimum height=\height, draw=dullblue] (\nodenum) at (\xpos,\ypos) {\name}; 
        
	\renewcommand{\nodenum}{v3}
    \renewcommand{\name}{$I$}
	\renewcommand{\xpos}{\xglobalshift-1.55*\xrel}
    \renewcommand{\ypos}{\yglobalshift+\yrel-\rowspace}
    \renewcommand{\height}{\heightsingle}
    \renewcommand{\width}{\widthsingle}
    \node[circle,, line width=.35mm, fill=egg, rounded corners, minimum width=\width, minimum height=\height, draw=dullblue] (\nodenum) at (\xpos,\ypos) {\name}; 
            
	\renewcommand{\nodenum}{v3}
    \renewcommand{\name}{$I$}
	\renewcommand{\xpos}{\xglobalshift-1.55*\xrel}
    \renewcommand{\ypos}{\yglobalshift+\yrel-2*\rowspace}
    \renewcommand{\height}{\heightsingle}
    \renewcommand{\width}{\widthsingle}
    \node[circle,, line width=.35mm, fill=egg, rounded corners, minimum width=\width, minimum height=\height, draw=dullblue] (\nodenum) at (\xpos,\ypos) {\name}; 
    
\renewcommand{\xglobalshift}{9*\xrel} 
\renewcommand{\yglobalshift}{0} 

  	\renewcommand{\nodenum}{v2}
    \renewcommand{\name}{$\bm{+}$}
	\renewcommand{\xpos}{\xglobalshift-2.3*\xrel}
    \renewcommand{\ypos}{\yglobalshift+1*\yrel}
    \renewcommand{\height}{\heightsingle}
    \renewcommand{\width}{\widthsingle}
    \node[] (\nodenum) at (\xpos,\ypos) {\name}; 
    
  	\renewcommand{\nodenum}{v2}
    \renewcommand{\name}{$I$}
	\renewcommand{\xpos}{\xglobalshift-1.55*\xrel}
    \renewcommand{\ypos}{\yglobalshift+5*\yrel}
    \renewcommand{\height}{\heightsingle}
    \renewcommand{\width}{\widthsingle}
    \node[circle,, line width=.35mm, fill=egg, rounded corners, minimum width=\width, minimum height=\height, draw=dullblue] (\nodenum) at (\xpos,\ypos) {\name}; 
    
  	\renewcommand{\nodenum}{v2}
    \renewcommand{\name}{$I$}
	\renewcommand{\xpos}{\xglobalshift-1.55*\xrel}
    \renewcommand{\ypos}{\yglobalshift+3*\yrel}
    \renewcommand{\height}{\heightsingle}
    \renewcommand{\width}{\widthsingle}
    \node[circle,, line width=.35mm, fill=egg, rounded corners, minimum width=\width, minimum height=\height, draw=dullblue] (\nodenum) at (\xpos,\ypos) {\name}; 
    
	\renewcommand{\nodenum}{v3}
    \renewcommand{\name}{$I$}
	\renewcommand{\xpos}{\xglobalshift-1.55*\xrel}
    \renewcommand{\ypos}{\yglobalshift+\yrel}
    \renewcommand{\height}{\heightsingle}
    \renewcommand{\width}{\widthsingle}
    \node[circle,, line width=.35mm, fill=egg, rounded corners, minimum width=\width, minimum height=\height, draw=dullblue] (\nodenum) at (\xpos,\ypos) {\name}; 
        
	\renewcommand{\nodenum}{v3}
    \renewcommand{\name}{$I$}
	\renewcommand{\xpos}{\xglobalshift-1.55*\xrel}
    \renewcommand{\ypos}{\yglobalshift+\yrel-\rowspace}
    \renewcommand{\height}{\heightsingle}
    \renewcommand{\width}{\widthsingle}
    \node[circle,, line width=.35mm, fill=egg, rounded corners, minimum width=\width, minimum height=\height, draw=dullblue] (\nodenum) at (\xpos,\ypos) {\name}; 
            
	\renewcommand{\nodenum}{v3}
    \renewcommand{\name}{$Y$}
	\renewcommand{\xpos}{\xglobalshift-1.55*\xrel}
    \renewcommand{\ypos}{\yglobalshift+\yrel-2*\rowspace}
    \renewcommand{\height}{\heightsingle}
    \renewcommand{\width}{\widthsingle}
    \node[circle,, line width=.35mm, fill=evergreen, rounded corners, minimum width=\width, minimum height=\height, draw=dullblue] (\nodenum) at (\xpos,\ypos) {\name}; 
    
\renewcommand{\xglobalshift}{10.5*\xrel} 
\renewcommand{\yglobalshift}{0} 

  	\renewcommand{\nodenum}{v2}
    \renewcommand{\name}{$\bm{+}$}
	\renewcommand{\xpos}{\xglobalshift-2.3*\xrel}
    \renewcommand{\ypos}{\yglobalshift+1*\yrel}
    \renewcommand{\height}{\heightsingle}
    \renewcommand{\width}{\widthsingle}
    \node[] (\nodenum) at (\xpos,\ypos) {\name}; 
    
  	\renewcommand{\nodenum}{v2}
    \renewcommand{\name}{$I$}
	\renewcommand{\xpos}{\xglobalshift-1.55*\xrel}
    \renewcommand{\ypos}{\yglobalshift+5*\yrel}
    \renewcommand{\height}{\heightsingle}
    \renewcommand{\width}{\widthsingle}
    \node[circle,, line width=.35mm, fill=egg, rounded corners, minimum width=\width, minimum height=\height, draw=dullblue] (\nodenum) at (\xpos,\ypos) {\name}; 
    
  	\renewcommand{\nodenum}{v2}
    \renewcommand{\name}{$I$}
	\renewcommand{\xpos}{\xglobalshift-1.55*\xrel}
    \renewcommand{\ypos}{\yglobalshift+3*\yrel}
    \renewcommand{\height}{\heightsingle}
    \renewcommand{\width}{\widthsingle}
    \node[circle,, line width=.35mm, fill=egg, rounded corners, minimum width=\width, minimum height=\height, draw=dullblue] (\nodenum) at (\xpos,\ypos) {\name}; 
    
	\renewcommand{\nodenum}{v3}
    \renewcommand{\name}{$I$}
	\renewcommand{\xpos}{\xglobalshift-1.55*\xrel}
    \renewcommand{\ypos}{\yglobalshift+\yrel}
    \renewcommand{\height}{\heightsingle}
    \renewcommand{\width}{\widthsingle}
    \node[circle,, line width=.35mm, fill=egg, rounded corners, minimum width=\width, minimum height=\height, draw=dullblue] (\nodenum) at (\xpos,\ypos) {\name}; 
        
	\renewcommand{\nodenum}{v3}
    \renewcommand{\name}{$Z$}
	\renewcommand{\xpos}{\xglobalshift-1.55*\xrel}
    \renewcommand{\ypos}{\yglobalshift+\yrel-\rowspace}
    \renewcommand{\height}{\heightsingle}
    \renewcommand{\width}{\widthsingle}
    \node[circle,, line width=.35mm, fill=evergreen, rounded corners, minimum width=\width, minimum height=\height, draw=dullblue] (\nodenum) at (\xpos,\ypos) {\name}; 
            
	\renewcommand{\nodenum}{v3}
    \renewcommand{\name}{$I$}
	\renewcommand{\xpos}{\xglobalshift-1.55*\xrel}
    \renewcommand{\ypos}{\yglobalshift+\yrel-2*\rowspace}
    \renewcommand{\height}{\heightsingle}
    \renewcommand{\width}{\widthsingle}
    \node[circle,, line width=.35mm, fill=egg, rounded corners, minimum width=\width, minimum height=\height, draw=dullblue] (\nodenum) at (\xpos,\ypos) {\name}; 
\end{tikzpicture}
}
\caption{Schematic examples of the two scrambling mechanisms. (a) Entanglement scrambling: a unitary $U_E$ maps a weight-1 Pauli operator to a weight-3 Pauli operator via conjugation, i.e. $U_E^\dagger (I\otimes I \otimes X\otimes I\otimes I)U_E=I\otimes Y\otimes Z\otimes I \otimes X$. (b) Magic scrambling: a unitary $U_M$ maps a weight-1 Pauli operator to a sum of three weight-1 Pauli operators. Here, $U_E$ ($U_M$) generates no magic (entanglement) scrambling. }
\label{Fig:Scrambling}
\end{figure}

\subsection{Magic scrambling}
We now introduce the framework for magic scrambling. A free unitary is defined to map any Pauli operator to a Pauli operator (up to a phase) under conjugation. By definition, free unitaries are Clifford unitaries. A non-Clifford unitary maps a Pauli operator to a superposition of Pauli operators, i.e. it generates operator entanglement (see Fig. \ref{Fig:Scrambling}). Magic monotones quantify the distance between a unitary and the set of Clifford unitaries. This framework is identical to the resource theory of magic, but we refer to it as magic scrambling to emphasize its operational interpretation.

We introduce a magic monotone, which we call the OTOC magic.

\begin{definition}\label{Lemma:LinearMagic}
The OTOC magic of an $n$-qubit unitary $U$ is
\begin{equation}
	O_M(U)\equiv\max_{P_{\vec{a}},P_{\vec{b}}\in \mathcal{P}_2^{\otimes n}}\left[
	1-\abs{\OTOC(U)}\right],
\end{equation}
where $\mathcal{P}_2$ is the qubit Pauli group and  ${\OTOC(U)=\langle U^\dagger P_{\vec{a}} U P_{\vec{b}} U^\dagger P_{\vec{a}} U P_{\vec{b}} \rangle}$.
\end{definition}
In Appendix~\ref{Proof:LemmaOTOCMagic}, we prove that the OTOC magic satisfies the following monotone properties:
 \begin{enumerate}
	\item (Faithfulness)  $O_M(V)\geq 0$ for any unitary $V$, and $O_M(U)=0$ iff $U$ is a Clifford unitary,
	\item (Invariance) $O_M(U_1VU_2)= O_M(V)$ for any unitary $V$ and Clifford unitaries $U_1$ and $U_2$. 
\end{enumerate}
We compute the OTOC magic for two examples of gates in the Clifford hierarchy \cite{Gottesman_1999}. The \mbox{$k$-th} level of the Clifford hierarchy is defined as ${\mathcal{C}^{(k)}=\{U\in \mathcal{U}(n): U^\dagger \mathcal{P}_2^{\otimes n} U \subset \mathcal{C}^{(k-1)}\}}$, where  $\mathcal{C}^{(1)}$ is the Pauli group and $\mathcal{C}^{(2)}$ is the Clifford group.
\begin{example}
\rm{All non-Clifford unitaries in the 3rd level of the Clifford hierarchy maximize the OTOC magic (see Appendix~\ref{Sec:ProofCliffordHierarchy} for a proof):
\begin{eqnarray}
O_M(U)=1,~~~ \forall U\in \mathcal{C}^{(3)}\backslash \mathcal{C}^{(2)}.
\end{eqnarray}}
\end{example}\label{Ex:non-clifford}

\begin{example}\label{Ex:Phase}
\rm{Define the single-qubit phase shift gate as
\begin{equation}\label{Eq:EpsilonGate}
    U_\varepsilon=\begin{pmatrix}
1 & 0 \\
0 & e^{i\varepsilon}
\end{pmatrix},
\end{equation}
where $\varepsilon\in [0,2\pi)$. Its OTOC magic is ${O_M(U_{\varepsilon})=1-\abs{\cos(2\varepsilon)}}$ (see Appendix~\ref{Sec:ProofPropepsilon} for a proof).
Let $\varepsilon_k=\frac{\pi}{2^{k-1}}$ for any integer $k> 0$. Then $U_{\varepsilon_k}\in\mathcal{C}^{(k)}$ \cite{PhysRevA.95.012329} and the OTOC magic is
\begin{equation}\label{Eq:HierarchyMagic}
	O_M(U_{\varepsilon_k})=1-\abs{\cos\left(\tfrac{\pi}{2^{k-2}}\right)}.
\end{equation}
}
\rm{Since gates in the $k$-th level of the Clifford hierarchy map Pauli operators to gates in the $k-1$ level, one may be tempted to interpret the level of the hierarchy as a measure of a gate's `distance' from the second level (i.e. as a measure of magic). However, according to Eq.~\eqref{Eq:HierarchyMagic}, for $k\geq 3$, the OTOC magic of $U_{\varepsilon_k}$ \textit{decreases} as $k$ increases, indicating that some gates in the higher levels of the Clifford hierarchy can have a small amount of magic  (see Fig.~\ref{Fig:Hierarchy}).}

\end{example}

\begin{figure}[h!]
\vspace{-1em}
\includegraphics[scale=.4]{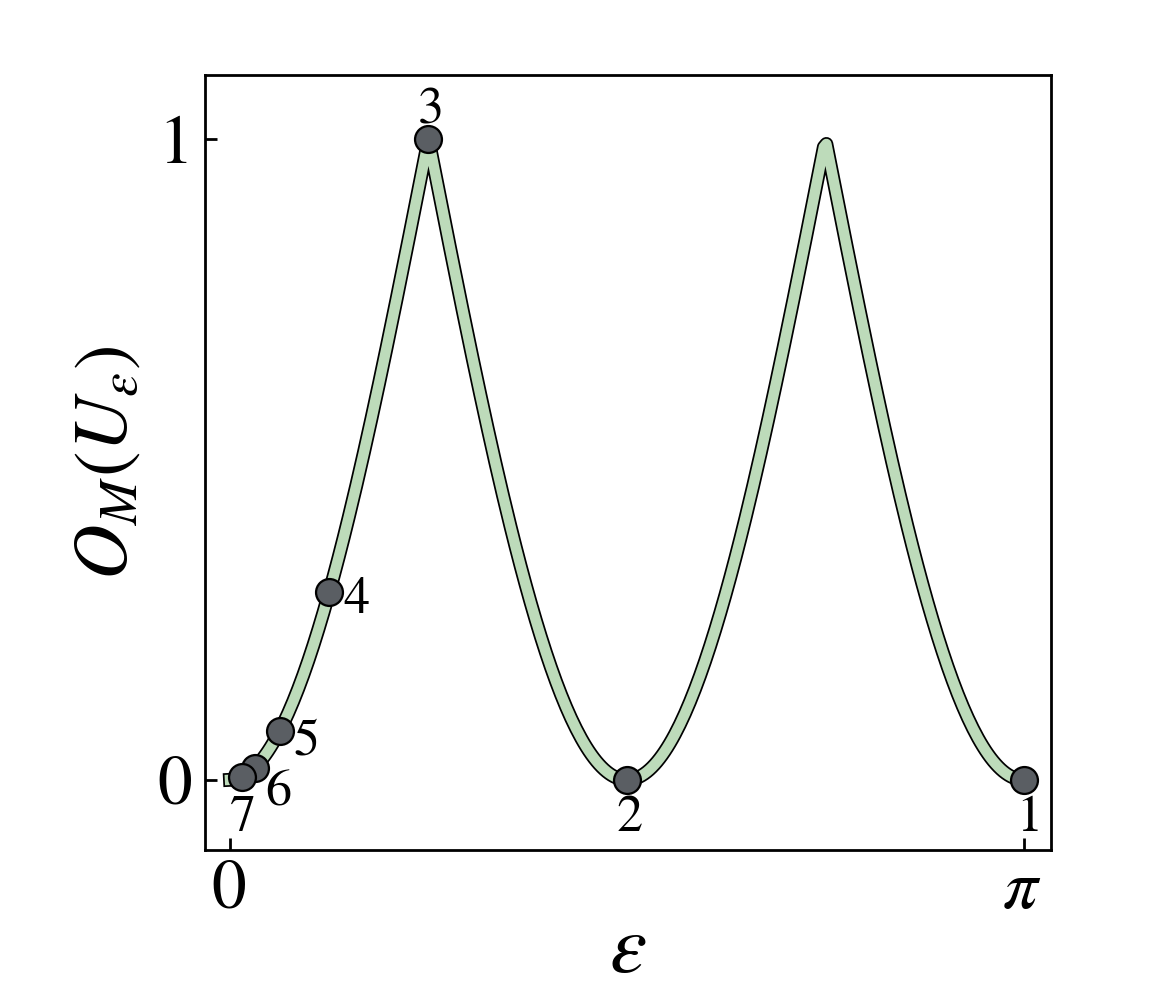}
\vspace{-1em}
\caption{Plot of $O_M(U_{\varepsilon})$ with respect to $\varepsilon$. Numbered points label the values of $O_M(U_{\varepsilon_k})$ when $\varepsilon_k=\frac{\pi}{2^{k-1}}$ and $k=1,2,\ldots, 7$. The $k$-th point corresponds to the OTOC magic of a unitary in the $k$-th level of the Clifford hierarchy. For $k\geq 3$, the OTOC magic decreases with increasing $k$.}
\label{Fig:Hierarchy}
\end{figure}

The OTOC magic can be measured via OTOC measurement protocols based on: a randomized measurement toolbox \cite{PhysRevX.9.021061,Joshi_2020}, classical shadows \cite{PhysRevResearch.3.033155}, and teleportation \cite{Landsman_2019}. Such protocols allow one to measure the OTOC by circumventing measurement techniques such as time reversal \cite{PhysRevA.94.040302,G_rttner_2017,PhysRevLett.120.070501}. Other magic monotones ~\cite{leone2022, PhysRevLett.128.050402,Oliviero2022} have been theoretically related to OTOCs.

We develop a scrambling resource theory based on two separate mechanisms, as some unitaries may be free with respect to one mechanism, but a resource with respect to the other. For example, the T gate is a free entanglement scrambling unitary, but it is a magic scrambling resource. The CNOT gate is a Clifford unitary, but it is an entanglement scrambling resource.

\subsection{Application to Google's  experimental results}
The OTOC fluctuations~\footnote{In this experiment, the OTOC is defined as an expectation value with respect to a pure state, but we utilize the maximally mixed state, as in Eq.~\eqref{Eq:OTOCdef}.}, defined as
\begin{equation}
	\delta=\sqrt{\av{U\sim \mathcal{E}}\abs{\OTOC(U)-\av{V\sim\mathcal{E}}\OTOC(V)}^2}\;,
\end{equation}
where the averages are taken over a unitary ensemble $\mathcal{E}$, have been measured by Google on the Sycamore quantum processor~\cite{Mi_2021}. Small fluctuations were shown to be evidence of operator entanglement. When $\mathcal{E}$ is the Clifford ensemble, $\av{U\sim\mathcal{E}}\OTOC(U)\rightarrow 0$ for large systems and $\OTOC(U)$ fluctuates between $+1$ and $-1$, implying that $\delta=1$. However, when $U$ is sampled from a non-Clifford ensemble, then $\abs{\OTOC(U)}\leq 1$. It was shown in this experiment that OTOC fluctuations decay as the magic of the unitaries in the ensemble is increased. Also, it was numerically shown in~\cite{Ahmadi2022} that the average mana, a magic monotone, increases as the OTOC fluctuations decrease. Here, we establish an inequality between $\delta$ and the OTOC magic.

\begin{theorem}\label{Thm:Fluctuations}
If $\av{U\sim \mathcal{E}}\OTOC(U)\rightarrow 0$, then 
\begin{equation}\label{Eq:FluxBound}
	\av{U\sim \mathcal{E}}O_M(U)\geq 1-\delta.
\end{equation}
\end{theorem}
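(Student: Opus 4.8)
The plan is to bound $O_M(U)$ from below pointwise by $1-\abs{\OTOC(U)}$, push the expectation over $\mathcal{E}$ through, and then control the residual term $\av{U\sim\mathcal{E}}\abs{\OTOC(U)}$ by the fluctuation $\delta$ via a second-moment (variance) argument. First, for the fixed pair of Pauli operators $P_A,P_D$ entering the definition of $\delta$, the definition of the OTOC magic as a maximum over \emph{all} Pauli pairs $P_{\vec a},P_{\vec b}$ immediately gives $O_M(U)\geq 1-\abs{\OTOC(U)}$ for every unitary $U$. Averaging over $\mathcal{E}$ and using linearity of expectation yields $\av{U\sim\mathcal{E}}O_M(U)\geq 1-\av{U\sim\mathcal{E}}\abs{\OTOC(U)}$, so it suffices to establish $\av{U\sim\mathcal{E}}\abs{\OTOC(U)}\leq \delta$ in the limit $\av{U\sim\mathcal{E}}\OTOC(U)\to 0$.

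Second, I would apply Jensen's inequality (equivalently Cauchy--Schwarz) to obtain $\av{U}\abs{\OTOC(U)}\leq \sqrt{\av{U}\abs{\OTOC(U)}^2}$, and then invoke the variance identity for the (in general complex-valued) random variable $\OTOC(U)$: writing $\mu\equiv\av{V\sim\mathcal{E}}\OTOC(V)$, one has $\av{U}\abs{\OTOC(U)-\mu}^2=\av{U}\abs{\OTOC(U)}^2-\abs{\mu}^2$, i.e. $\av{U}\abs{\OTOC(U)}^2=\delta^2+\abs{\mu}^2$. Combining the two displays gives the clean exact inequality $\av{U\sim\mathcal{E}}O_M(U)\geq 1-\sqrt{\delta^2+\abs{\mu}^2}$, valid for every ensemble; in the stated regime $\mu\to 0$ this reduces to $\av{U\sim\mathcal{E}}O_M(U)\geq 1-\delta$, as claimed.

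I do not anticipate a genuine obstacle; the argument is essentially a pointwise inequality plus a variance bookkeeping step. The two points that need care are: (i) that the OTOC in $\delta$ is evaluated at a fixed Pauli pair while $O_M$ is a maximum over Pauli pairs, so the pointwise comparison $O_M(U)\ge 1-\abs{\OTOC(U)}$ goes in the direction we need; and (ii) that $\OTOC(U)$ need not be real, so the variance identity must be written with $\abs{\cdot}^2$ rather than ordinary squares, and the hypothesis ``$\av{U}\OTOC(U)\to 0$'' should be read as a statement along a family of ensembles, with the conclusion holding asymptotically (the exact form $\av{U}O_M(U)\geq 1-\sqrt{\delta^2+\abs{\mu}^2}$ making the dependence on $\mu$ explicit).
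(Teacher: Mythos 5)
Your proposal is correct and follows essentially the same route as the paper: both rest on the pointwise bound $O_M(U)\geq 1-\abs{\OTOC(U)}$ together with a Jensen/second-moment step relating $\av{U\sim\mathcal{E}}\abs{\OTOC(U)}$ to $\delta$. The only difference is in how the mean $\mu=\av{V\sim\mathcal{E}}\OTOC(V)$ is stripped off --- the paper applies Jensen to $\abs{\OTOC(U)-\mu}$ and then the reverse triangle inequality, yielding $\av{U\sim\mathcal{E}}O_M(U)\geq 1-\delta-\abs{\mu}$, whereas your variance identity gives the marginally sharper $\av{U\sim\mathcal{E}}O_M(U)\geq 1-\sqrt{\delta^2+\abs{\mu}^2}$ --- and both collapse to $1-\delta$ in the stated limit $\mu\to 0$.
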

We prove Theorem~\ref{Thm:Fluctuations} in Appendix~\ref{Proof:Fluctuations}. Small OTOC fluctuations are therefore an indication of magic and operator entanglement \footnote{
In fact, a small value of $\abs{\OTOC(U)}$ is an indication of the magic of $U$, since it bounds $O_M(U)$.}. 

We provide numerical simulations to support the bound in Eq.~\eqref{Eq:FluxBound}. The ensemble $\mathcal{E}$ consists of 4-qubit unitaries of the form:
\begin{equation}\label{Eq:RandomUnitary}
\scalebox{.6}{
\begin{tikzpicture}
    \draw [thick,color=charcoal]
    (0,\yrel+2*\rowspace)--(11*\xrel,\yrel+2*\rowspace)
    (0,\yrel+\rowspace)--(11*\xrel,\yrel+\rowspace)
    (0,\yrel)--(11*\xrel,\yrel)
    (0,-\yrel)--(11*\xrel,-\yrel);
    
\renewcommand{\xglobalshift}{0}
\renewcommand{\yglobalshift}{0}

	\renewcommand{\nodenum}{v1}
    \renewcommand{\name}{\large $U=$}
	\renewcommand{\xpos}{-\xrel+\xglobalshift}
    \renewcommand{\ypos}{1*\rowspace+\yglobalshift}
    \renewcommand{\height}{\heightdouble}
    \renewcommand{\width}{\widthsingle}
    \node[] (\nodenum) at (\xpos,\ypos) {\name};

    \renewcommand{\nodenum}{v1}
    \renewcommand{\name}{}
	\renewcommand{\xpos}{2*\xrel+\xglobalshift}
    \renewcommand{\ypos}{\rowspace+\yglobalshift}
    \renewcommand{\height}{\heightdouble}
    \renewcommand{\width}{\widthsingle}
    \node[rectangle, fill=egg,  line width =.3mm,rounded corners, minimum width=\width, minimum height=  	    \height, draw=charcoal] (\nodenum) at (\xpos,\ypos) {\name};
    
	\renewcommand{\nodenum}{v2}
    \renewcommand{\name}{}
	\renewcommand{\xpos}{3*\xrel+\xglobalshift}
    \renewcommand{\ypos}{2*\rowspace+\yglobalshift}
    \renewcommand{\height}{\heightdouble}
    \renewcommand{\width}{\widthsingle}
    \node[rectangle, fill=egg,  line width =.3mm,rounded corners, minimum width=\width, minimum height=  	    \height, draw=charcoal] (\nodenum) at (\xpos,\ypos) {\name};
    
    \renewcommand{\nodenum}{v3}
    \renewcommand{\name}{}
	\renewcommand{\xpos}{3*\xrel+\xglobalshift}
    \renewcommand{\ypos}{\yglobalshift}
    \renewcommand{\height}{\heightdouble}
    \renewcommand{\width}{\widthsingle}
    \node[rectangle, fill=egg,  line width =.3mm,rounded corners, minimum width=\width, minimum height=  	    \height, draw=charcoal] (\nodenum) at (\xpos,\ypos) {\name};
    
	\renewcommand{\nodenum}{v9}
    \renewcommand{\name}{$\varepsilon$}
	\renewcommand{\xpos}{1*\xrel+\xglobalshift}
    \renewcommand{\ypos}{-1*\rowspace+\yrel}
    \renewcommand{\height}{\heightsingle}
    \renewcommand{\width}{\widthsingle}
    \node[rectangle, fill=evergreen,  line width =.35mm,rounded corners, minimum width=\width, minimum height=  	    \height, draw=charcoal] (\nodenum) at (\xpos,\ypos) {\name};
    
\renewcommand{\xglobalshift}{4*\xrel}
\renewcommand{\yglobalshift}{0}

    \renewcommand{\nodenum}{v1}
    \renewcommand{\name}{}
	\renewcommand{\xpos}{2*\xrel+\xglobalshift}
    \renewcommand{\ypos}{\rowspace+\yglobalshift}
    \renewcommand{\height}{\heightdouble}
    \renewcommand{\width}{\widthsingle}
    \node[rectangle, fill=egg,  line width =.3mm,rounded corners, minimum width=\width, minimum height=  	    \height, draw=charcoal] (\nodenum) at (\xpos,\ypos) {\name};
    
	\renewcommand{\nodenum}{v2}
    \renewcommand{\name}{}
	\renewcommand{\xpos}{3*\xrel+\xglobalshift}
    \renewcommand{\ypos}{2*\rowspace+\yglobalshift}
    \renewcommand{\height}{\heightdouble}
    \renewcommand{\width}{\widthsingle}
    \node[rectangle, fill=egg,  line width =.3mm,rounded corners, minimum width=\width, minimum height=  	    \height, draw=charcoal] (\nodenum) at (\xpos,\ypos) {\name};
    
    \renewcommand{\nodenum}{v3}
    \renewcommand{\name}{}
	\renewcommand{\xpos}{3*\xrel+\xglobalshift}
    \renewcommand{\ypos}{\yglobalshift}
    \renewcommand{\height}{\heightdouble}
    \renewcommand{\width}{\widthsingle}
    \node[rectangle, fill=egg,  line width =.3mm,rounded corners, minimum width=\width, minimum height=  	    \height, draw=charcoal] (\nodenum) at (\xpos,\ypos) {\name};
    
	\renewcommand{\nodenum}{v9}
    \renewcommand{\name}{$\varepsilon$}
	\renewcommand{\xpos}{1*\xrel+\xglobalshift}
    \renewcommand{\ypos}{1*\rowspace+\yrel}
    \renewcommand{\height}{\heightsingle}
    \renewcommand{\width}{\widthsingle}
    \node[rectangle, fill=evergreen,  line width =.35mm,rounded corners, minimum width=\width, minimum height=  	    \height, draw=charcoal] (\nodenum) at (\xpos,\ypos) {\name};

\renewcommand{\xglobalshift}{7*\xrel}
\renewcommand{\yglobalshift}{0}

    \renewcommand{\nodenum}{v1}
    \renewcommand{\name}{}
	\renewcommand{\xpos}{2*\xrel+\xglobalshift}
    \renewcommand{\ypos}{\rowspace+\yglobalshift}
    \renewcommand{\height}{\heightdouble}
    \renewcommand{\width}{\widthsingle}
    \node[rectangle, fill=egg,  line width =.3mm,rounded corners, minimum width=\width, minimum height=  	    \height, draw=charcoal] (\nodenum) at (\xpos,\ypos) {\name};
    
	\renewcommand{\nodenum}{v2}
    \renewcommand{\name}{}
	\renewcommand{\xpos}{3*\xrel+\xglobalshift}
    \renewcommand{\ypos}{2*\rowspace+\yglobalshift}
    \renewcommand{\height}{\heightdouble}
    \renewcommand{\width}{\widthsingle}
    \node[rectangle, fill=egg,  line width =.3mm,rounded corners, minimum width=\width, minimum height=  	    \height, draw=charcoal] (\nodenum) at (\xpos,\ypos) {\name};
    
    \renewcommand{\nodenum}{v3}
    \renewcommand{\name}{}
	\renewcommand{\xpos}{3*\xrel+\xglobalshift}
    \renewcommand{\ypos}{\yglobalshift}
    \renewcommand{\height}{\heightdouble}
    \renewcommand{\width}{\widthsingle}
    \node[rectangle, fill=egg,  line width =.3mm,rounded corners, minimum width=\width, minimum height=  	    \height, draw=charcoal] (\nodenum) at (\xpos,\ypos) {\name};
    
	\renewcommand{\nodenum}{v9}
    \renewcommand{\name}{$\varepsilon$}
	\renewcommand{\xpos}{1*\xrel+\xglobalshift}
    \renewcommand{\ypos}{2*\rowspace+\yrel}
    \renewcommand{\height}{\heightsingle}
    \renewcommand{\width}{\widthsingle}
    \node[rectangle, fill=evergreen,  line width =.35mm,rounded corners, minimum width=\width, minimum height=  	    \height, draw=charcoal] (\nodenum) at (\xpos,\ypos) {\name};
    
	\renewcommand{\nodenum}{v9}
    \renewcommand{\name}{\Large $\cdots$}
	\renewcommand{\xpos}{-3*\xrel+\xglobalshift}
    \renewcommand{\ypos}{0*\rowspace+2*\yrel}
    \renewcommand{\height}{\heightsingle}
    \renewcommand{\width}{\widthsingle}
    \node[] (\nodenum) at (\xpos,\ypos) {\name};
    
    \renewcommand{\xpos}{-.2*\xrel}
	\draw [decorate,line width=.75pt,color=charcoal,decoration={brace,amplitude=5pt},xshift=\xrel,yshift=-\rowspace]	(3*\xrel+\xglobalshift,-.15*\yrel-1*\rowspace+\yglobalshift) -- (2*\xrel+\xglobalshift,-1*\rowspace-.15*\yrel+\yglobalshift) node [black,midway,xshift=9pt] {};
	
    \renewcommand{\nodenum}{v1}
    \renewcommand{\name}{\large $l$}
	\renewcommand{\xpos}{2.5*\xrel+\xglobalshift}
    \renewcommand{\ypos}{.9*\yrel-2*\rowspace+\yglobalshift}
    \renewcommand{\height}{\heightdouble}
    \renewcommand{\width}{\widthsingle}
    \node[] (\nodenum) at (\xpos,\ypos) {\name};
    
\end{tikzpicture}
}.
\end{equation}
The 2-qubit unitaries are randomly sampled from the Clifford group and are arranged in a brick-work architecture with $l=4$ layers. The four single-qubit unitaries are each $U_{\varepsilon}$, defined in Eq.~\eqref{Eq:EpsilonGate}, with a fixed value of $\varepsilon$. We compute fluctuations of the OTOC ${\langle U^\dagger X_1 U Z_4 U^\dagger X_1 U Z_4 \rangle}$, where $X_1$ denotes a Pauli $X$ operator on the first qubit. The OTOC magic of $U$ is tuned by varying $\varepsilon$. Fig.~\ref{Fig:OTOC Flux} plots $\av{U\sim \mathcal{E}}O_M(U)$ and  $1-\delta$ against $\varepsilon$. Both are positively correlated with $\varepsilon$ and  $\av{U\sim \mathcal{E}}O_M(U)$ is bounded by $1-\delta$ from below.

\begin{figure}[t]
\includegraphics[scale=.4]{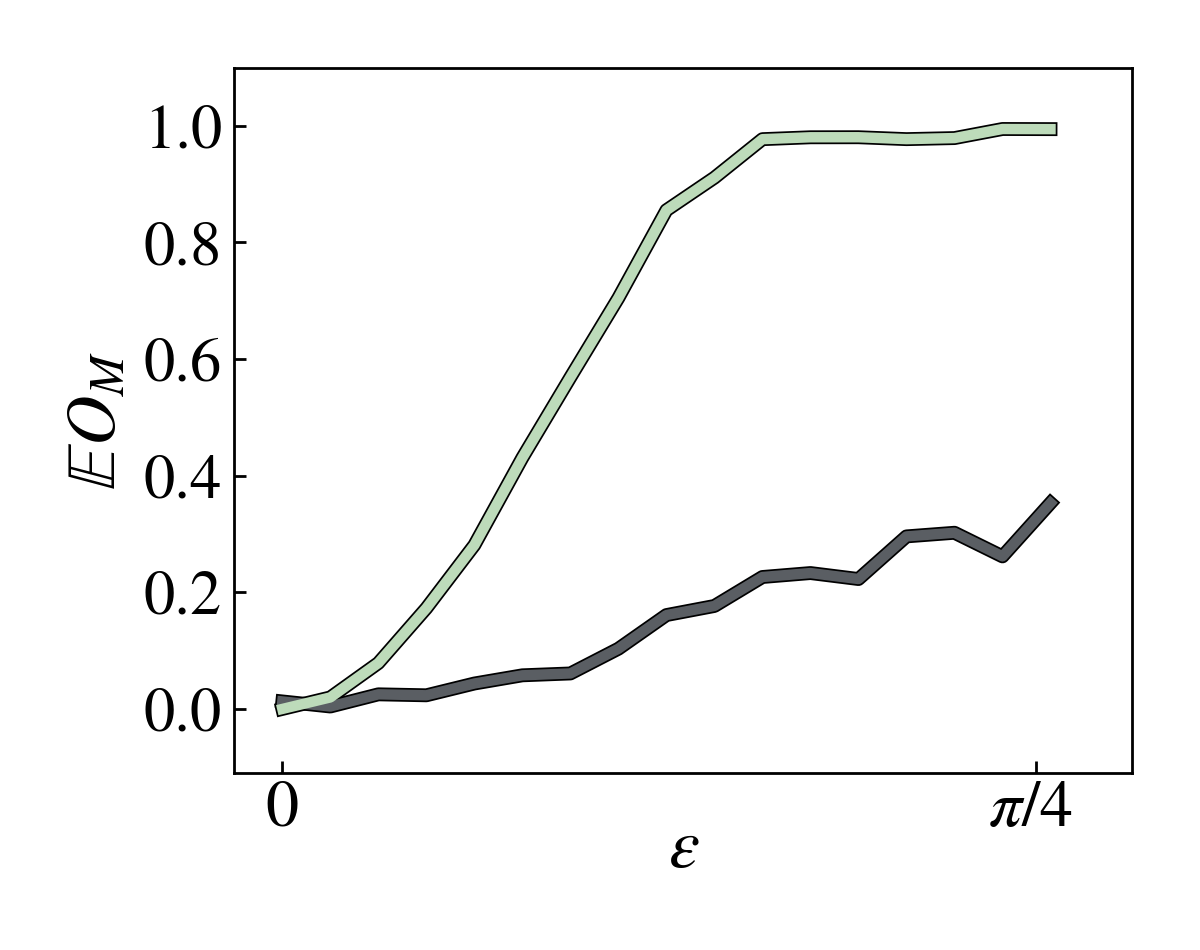}
\vspace{-1em}
\caption{Plot of $\av{U\sim \mathcal{E}}O_M(U)$ (green) and $1-\delta$ (black) against $\varepsilon$. These quantities are empirically computed using 50 random samples from the ensemble $\mathcal{E}$ generated by Eq.~\eqref{Eq:RandomUnitary}.}
\label{Fig:OTOC Flux}
\end{figure}

\subsection{Application to Hayden-Preskill decoding protocol}

\begin{figure}[h!]
\scalebox{.7}{
\begin{tikzpicture}
    \draw [thick,color=charcoal]
    (0,\yrel+2*\rowspace)--(3.5*\xrel,\yrel+2*\rowspace)
    (0,\yrel+\rowspace)--(3.5*\xrel,\yrel+\rowspace)
    (0,\yrel)--(2.5*\xrel,\yrel)
    (2.5*\xrel,1.1*\yrel)--(3.5*\xrel,1.1*\yrel)
    (2.5*\xrel,0.9*\yrel)--(3.5*\xrel,0.9*\yrel)
    (0,-\yrel)--(2.5*\xrel,-\yrel)
    (2.5*\xrel,-1.1*\yrel)--(3.5*\xrel,-1.1*\yrel)
    (2.5*\xrel,-0.9*\yrel)--(3.5*\xrel,-0.9*\yrel)
    (0,-\yrel-\rowspace)--(3.5*\xrel,-\yrel-\rowspace)
    (0,-\yrel-2*\rowspace)--(3.5*\xrel,-\yrel-2*\rowspace);
    
	\renewcommand{\xpos}{-.2*\xrel}
	\draw [decorate,line width=.75pt,color=charcoal,decoration={brace,amplitude=5pt},xshift=\xrel,yshift=-\rowspace]	(\xpos,-\yrel+2*\rowspace) -- (\xpos,+2*\rowspace+\yrel) node [black,midway,xshift=9pt] {};
	    
	\renewcommand{\xpos}{-.2*\xrel}
	\draw [decorate,line width=.75pt,color=charcoal,decoration={brace,amplitude=5pt},xshift=\xrel,yshift=-\rowspace]	(\xpos,-\yrel-0*\rowspace) -- (\xpos,-0*\rowspace+\yrel) node [black,midway,xshift=9pt] {};
	
	\renewcommand{\xpos}{-.2*\xrel}
	\draw [decorate,line width=.75pt,color=charcoal,decoration={brace,amplitude=5pt},xshift=\xrel,yshift=-\rowspace]	(\xpos,-\yrel-2*\rowspace) -- (\xpos,-2*\rowspace+\yrel) node [black,midway,xshift=9pt] {};
	
  	\renewcommand{\nodenum}{v2}
    \renewcommand{\name}{$R$}
	\renewcommand{\xpos}{-1.55*\xrel}
    \renewcommand{\ypos}{5*\yrel}
    \renewcommand{\height}{\heightsingle}
    \renewcommand{\width}{\widthsingle}
    \node[rectangle,, line width=.35mm, fill=egg, rounded corners, minimum width=\width, minimum height=\height, draw=dullblue] (\nodenum) at (\xpos,\ypos) {\name}; 
   	
  	\renewcommand{\nodenum}{v2}
    \renewcommand{\name}{Alice Ref}
	\renewcommand{\xpos}{-2.75*\xrel}
    \renewcommand{\ypos}{5*\yrel}
    \renewcommand{\height}{\heightsingle}
    \renewcommand{\width}{\widthsingle}
    \node[] (\nodenum) at (\xpos,\ypos) {\name}; 
    
  	\renewcommand{\nodenum}{v2}
    \renewcommand{\name}{$A$}
	\renewcommand{\xpos}{-1.55*\xrel}
    \renewcommand{\ypos}{3*\yrel}
    \renewcommand{\height}{\heightsingle}
    \renewcommand{\width}{\widthsingle}
    \node[rectangle,, line width=.35mm, fill=egg, rounded corners, minimum width=\width, minimum height=\height, draw=dullblue] (\nodenum) at (\xpos,\ypos) {\name}; 
        
  	\renewcommand{\nodenum}{v2}
    \renewcommand{\name}{Alice}
	\renewcommand{\xpos}{-2.75*\xrel}
    \renewcommand{\ypos}{3*\yrel}
    \renewcommand{\height}{\heightsingle}
    \renewcommand{\width}{\widthsingle}
    \node[] (\nodenum) at (\xpos,\ypos) {\name}; 
    
	\renewcommand{\nodenum}{v3}
    \renewcommand{\name}{$B$}
	\renewcommand{\xpos}{-1.55*\xrel}
    \renewcommand{\ypos}{\yrel}
    \renewcommand{\height}{\heightsingle}
    \renewcommand{\width}{\widthsingle}
    \node[rectangle,, line width=.35mm, fill=egg, rounded corners, minimum width=\width, minimum height=\height, draw=dullblue] (\nodenum) at (\xpos,\ypos) {\name}; 
    
	\renewcommand{\nodenum}{v3}
    \renewcommand{\name}{Black hole}
	\renewcommand{\xpos}{-2.75*\xrel}
    \renewcommand{\ypos}{\yrel}
    \renewcommand{\height}{\heightsingle}
    \renewcommand{\width}{\widthsingle}
    \node[] (\nodenum) at (\xpos,\ypos) {\name}; 
    
	\renewcommand{\nodenum}{v3}
    \renewcommand{\name}{$B'$}
	\renewcommand{\xpos}{-1.55*\xrel}
    \renewcommand{\ypos}{\yrel-\rowspace}
    \renewcommand{\height}{\heightsingle}
    \renewcommand{\width}{\widthsingle}
    \node[rectangle,, line width=.35mm, fill=egg, rounded corners, minimum width=\width, minimum height=\height, draw=dullblue] (\nodenum) at (\xpos,\ypos) {\name}; 
                
	\renewcommand{\nodenum}{v3}
    \renewcommand{\name}{Radiation}
	\renewcommand{\xpos}{-2.75*\xrel}
    \renewcommand{\ypos}{\yrel-\rowspace}
    \renewcommand{\height}{\heightsingle}
    \renewcommand{\width}{\widthsingle}
    \node[] (\nodenum) at (\xpos,\ypos) {\name}; 
    
	\renewcommand{\nodenum}{v3}
    \renewcommand{\name}{$A'$}
	\renewcommand{\xpos}{-1.55*\xrel}
    \renewcommand{\ypos}{\yrel-2*\rowspace}
    \renewcommand{\height}{\heightsingle}
    \renewcommand{\width}{\widthsingle}
    \node[rectangle,, line width=.35mm, fill=egg, rounded corners, minimum width=\width, minimum height=\height, draw=dullblue] (\nodenum) at (\xpos,\ypos) {\name}; 
        
	\renewcommand{\nodenum}{v3}
    \renewcommand{\name}{Bob}
	\renewcommand{\xpos}{-2.75*\xrel}
    \renewcommand{\ypos}{\yrel-2*\rowspace}
    \renewcommand{\height}{\heightsingle}
    \renewcommand{\width}{\widthsingle}
    \node[] (\nodenum) at (\xpos,\ypos) {\name}; 
    
	\renewcommand{\nodenum}{v3}
    \renewcommand{\name}{$R'$}
	\renewcommand{\xpos}{-1.55*\xrel}
    \renewcommand{\ypos}{\yrel-3*\rowspace}
    \renewcommand{\height}{\heightsingle}
    \renewcommand{\width}{\widthsingle}
    \node[rectangle,, line width=.35mm, fill=egg, rounded corners, minimum width=\width, minimum height=\height, draw=dullblue] (\nodenum) at (\xpos,\ypos) {\name}; 
        
	\renewcommand{\nodenum}{v3}
    \renewcommand{\name}{Bob Ref}
	\renewcommand{\xpos}{-2.75*\xrel}
    \renewcommand{\ypos}{\yrel-3*\rowspace}
    \renewcommand{\height}{\heightsingle}
    \renewcommand{\width}{\widthsingle}
    \node[] (\nodenum) at (\xpos,\ypos) {\name}; 
    
	\renewcommand{\nodenum}{v4}
    \renewcommand{\name}{ $\ket{\mathrm{Bell}}$}
	\renewcommand{\xpos}{-.75*\xrel}
    \renewcommand{\ypos}{4*\yrel}
    \renewcommand{\height}{\heightsingle}
    \renewcommand{\width}{\widthsingle}
    \node[](\nodenum) at (\xpos,\ypos) {\name};
        
	\renewcommand{\nodenum}{v4}
    \renewcommand{\name}{ $\ket{\mathrm{Bell}}$}
	\renewcommand{\xpos}{-.75*\xrel}
    \renewcommand{\ypos}{0*\yrel}
    \renewcommand{\height}{\heightsingle}
    \renewcommand{\width}{\widthsingle}
    \node[](\nodenum) at (\xpos,\ypos) {\name};
    
	\renewcommand{\nodenum}{v5}
    \renewcommand{\name}{$\ket{\mathrm{Bell}}$}
	\renewcommand{\xpos}{-.75*\xrel}
    \renewcommand{\ypos}{-2*\rowspace}
    \renewcommand{\height}{9.7em}
    \renewcommand{\width}{\widthsingle}
    \node[](\nodenum) at (\xpos,\ypos) {\name};  
	
	\renewcommand{\nodenum}{v6}
    \renewcommand{\name}{$U_{\mathrm{bh}}$}
	\renewcommand{\xpos}{\xrel}
    \renewcommand{\ypos}{\rowspace}
    \renewcommand{\height}{\heightdouble}
    \renewcommand{\width}{\widthsingle}
    \node[rectangle, fill=white,  line width =.3mm,rounded corners, minimum width=\width, minimum height=  	    \height, draw=charcoal] (\nodenum) at (\xpos,\ypos) {\name};
    
    \renewcommand{\nodenum}{v7}
    \renewcommand{\name}{$U_{\mathrm{bh}}^*$}
	\renewcommand{\xpos}{\xrel}
    \renewcommand{\ypos}{-\rowspace}
    \renewcommand{\height}{\heightdouble}
    \renewcommand{\width}{\widthsingle}
    \node[rectangle, fill=white,  line width =.3mm,rounded corners, minimum width=\width, minimum height=  	    \height, draw=charcoal] (\nodenum) at (\xpos,\ypos) {\name};
    
    \renewcommand{\nodenum}{v15}
    \renewcommand{\name}{\includegraphics[width=.025\textwidth]{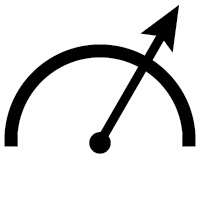}}
	\renewcommand{\xpos}{2.5*\xrel}
    \renewcommand{\ypos}{0}
    \renewcommand{\height}{\heightdouble}
    \renewcommand{\width}{\widthsingle}
    \node[rectangle, fill=white,  line width =.3mm,rounded corners, minimum width=\width, minimum height=  	    \height, draw=charcoal] (\nodenum) at (\xpos,\ypos) {\name};

	\renewcommand{\nodenum}{v17}
    \renewcommand{\name}{$R$}
	\renewcommand{\xpos}{4*\xrel}
    \renewcommand{\ypos}{\yrel+2*\rowspace}
    \renewcommand{\height}{\heightsingle}
    \renewcommand{\width}{\widthsingle}
    \node[rectangle,, line width=.35mm, fill=egg, rounded corners, minimum width=\width, minimum height=\height, draw=dullblue] (\nodenum) at (\xpos,\ypos) {\name};
    
	\renewcommand{\nodenum}{v17}
    \renewcommand{\name}{$C$}
	\renewcommand{\xpos}{4*\xrel}
    \renewcommand{\ypos}{\yrel+\rowspace}
    \renewcommand{\height}{\heightsingle}
    \renewcommand{\width}{\widthsingle}
    \node[rectangle,, line width=.35mm, fill=egg, rounded corners, minimum width=\width, minimum height=\height, draw=dullblue] (\nodenum) at (\xpos,\ypos) {\name};
    
	\renewcommand{\nodenum}{v17}
    \renewcommand{\name}{$D$}
	\renewcommand{\xpos}{4*\xrel}
    \renewcommand{\ypos}{\yrel}
    \renewcommand{\height}{\heightsingle}
    \renewcommand{\width}{\widthsingle}
    \node[rectangle,, line width=.35mm, fill=egg, rounded corners, minimum width=\width, minimum height=\height, draw=dullblue] (\nodenum) at (\xpos,\ypos) {\name};
        
	\renewcommand{\nodenum}{v17}
    \renewcommand{\name}{$D'$}
	\renewcommand{\xpos}{4*\xrel}
    \renewcommand{\ypos}{\yrel-\rowspace}
    \renewcommand{\height}{\heightsingle}
    \renewcommand{\width}{\widthsingle}
    \node[rectangle,, line width=.35mm, fill=egg, rounded corners, minimum width=\width, minimum height=\height, draw=dullblue] (\nodenum) at (\xpos,\ypos) {\name};
        
	\renewcommand{\nodenum}{v17}
    \renewcommand{\name}{$C'$}
	\renewcommand{\xpos}{4*\xrel}
    \renewcommand{\ypos}{\yrel-2*\rowspace}
    \renewcommand{\height}{\heightsingle}
    \renewcommand{\width}{\widthsingle}
    \node[rectangle,, line width=.35mm, fill=egg, rounded corners, minimum width=\width, minimum height=\height, draw=dullblue] (\nodenum) at (\xpos,\ypos) {\name};
            
	\renewcommand{\nodenum}{v17}
    \renewcommand{\name}{$R'$}
	\renewcommand{\xpos}{4*\xrel}
    \renewcommand{\ypos}{\yrel-3*\rowspace}
    \renewcommand{\height}{\heightsingle}
    \renewcommand{\width}{\widthsingle}
    \node[rectangle,, line width=.35mm, fill=egg, rounded corners, minimum width=\width, minimum height=\height, draw=dullblue] (\nodenum) at (\xpos,\ypos) {\name};

\end{tikzpicture}
}
\caption{Quantum circuit for the Hayden-Preskill decoding protocol. Alice maximally entangles her state, defined on the system $A$, with a reference system $R$. Bob maximally entangles his system $A'$ with a reference system $R'$. Alice throws her state into the black hole, system $B$. Afterwards, the new black hole, system $AB$, evolves under the unitary $U_{\mathrm{bh}}$. Bob applies the unitary $U_{\mathrm{bh}}^*$ to his system $B'A'$, where $B'$ represents the Hawking radiation emitted by the black hole before Alice threw her state in. Bob then projectively measures a Bell state, $\ket{\mathrm{Bell}}_{D,D'}=\frac{1}{\sqrt{2^n}}\sum_{i=1}^{2^n}\ket{i}_D\otimes \ket{i}_{D'}$, between systems $D$ and $D'$. System $D$ represents the Hawking radiation emitted after Alice throws her state in. If a successful decoding occurs, then Bob's reference system $R'$ forms a Bell state with $R$.}
\label{Fig:HP}
\end{figure}

We apply the resource theory of scrambling to the Hayden-Preskill thought experiment \cite{Hayden_2007}, an information recovery problem. In this thought experiment, a quantum state is thrown into an $n$-qubit black hole. The black hole's scrambling dynamics, $U_{\mathrm{bh}}$, lead to delocalization of the state's information. By collecting the emitted Hawking radiation, one can decode the state thrown in. We show that the Pauli growth and the OTOC magic bound the decoding fidelity.

Yoshida et al. \cite{YoshidaEfficient, PhysRevX.9.011006,Yoshida2021Recovery} constructed a teleportation-based decoding protocol to recover the input state thrown into the black hole with a decoding fidelity of $F(U_{\mathrm{bh}})$ (see Fig.~\ref{Fig:HP} for a description). The decoding fidelity is determined by the average OTOC:
\begin{equation}\label{Eq:OTOCFidelity}
	\av{P_A, P_D}\OTOC(U_{\mathrm{bh}})=\frac{1}{d_A^2 F(U_{\mathrm{bh}})},
\end{equation}
where $\OTOC(U_{\mathrm{bh}})=\langle U_{\mathrm{bh}}^\dagger P_D U_{\mathrm{bh}} P_A U_{\mathrm{bh}}^\dagger P_D U_{\mathrm{bh}} P_A\rangle $ and ${d_A=2^{n_A}}$ is the Hilbert space dimension of system $A$. The average is taken uniformly over all Pauli operators on the systems $A$ and $D$, defined in Fig.~\ref{Fig:HP}. Using Eq.~\eqref{Eq:OTOCFidelity}, we show that the OTOC magic bounds the decoding fidelity.

\begin{theorem}\label{Theorem:HP}
If 
$\av{\substack{P_A,\\ P_D}}\abs{\OTOC(U_{\mathrm{bh}})}=\av{\substack{P_A,\\ P_D}}\OTOC(U_{\mathrm{bh}}) +\eta$,
where $\eta>0$, then the decoding fidelity $F(U_{\mathrm{bh}})$ is bounded by the OTOC magic of $U_{\mathrm{bh}}$:

\begin{equation}
	F(U_{\mathrm{bh}})\leq \frac{1}{d_A^2(1-O_M(U_{\mathrm{bh}})-\eta)}.
\end{equation}

\end{theorem}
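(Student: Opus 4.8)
The plan is to invert the fidelity formula~\eqref{Eq:OTOCFidelity} and then bound the averaged OTOC appearing in its denominator from below, term by term, using the definition of the OTOC magic. First I would rewrite~\eqref{Eq:OTOCFidelity} as
\begin{equation}
  F(U_{\mathrm{bh}}) = \frac{1}{d_A^2\,\av{P_A,P_D}\OTOC(U_{\mathrm{bh}})}\,,
\end{equation}
which is legitimate because $F(U_{\mathrm{bh}})>0$; this same positivity also certifies $\av{P_A,P_D}\OTOC(U_{\mathrm{bh}})>0$. It therefore suffices to establish the lower bound $\av{P_A,P_D}\OTOC(U_{\mathrm{bh}})\geq 1-O_M(U_{\mathrm{bh}})-\eta$.

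The key step uses Definition~\ref{Lemma:LinearMagic}. For any fixed Pauli operators $P_A$ (acting nontrivially only on $A$) and $P_D$ (acting nontrivially only on $D$), both lie in $\mathcal{P}_2^{\otimes n}$, so the pair $(P_{\vec a},P_{\vec b})=(P_D,P_A)$ is one of the pairs over which $O_M(U_{\mathrm{bh}})$ maximizes $1-|\OTOC(U_{\mathrm{bh}})|$. Hence $1-|\OTOC(U_{\mathrm{bh}})|\leq O_M(U_{\mathrm{bh}})$, i.e. $|\OTOC(U_{\mathrm{bh}})|\geq 1-O_M(U_{\mathrm{bh}})$, for every choice of $P_A$ and $P_D$. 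Averaging uniformly over $P_A$ and $P_D$ (noting that $O_M(U_{\mathrm{bh}})$ does not depend on them) yields $\av{P_A,P_D}|\OTOC(U_{\mathrm{bh}})|\geq 1-O_M(U_{\mathrm{bh}})$.

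Next I would invoke the hypothesis $\av{P_A,P_D}|\OTOC(U_{\mathrm{bh}})| = \av{P_A,P_D}\OTOC(U_{\mathrm{bh}})+\eta$ to turn the previous line into $\av{P_A,P_D}\OTOC(U_{\mathrm{bh}})\geq 1-O_M(U_{\mathrm{bh}})-\eta$. Combining with the inverted fidelity formula finishes the proof: provided $1-O_M(U_{\mathrm{bh}})-\eta>0$ (outside this regime the stated bound is vacuous), both sides of this inequality are positive, so taking reciprocals reverses it, and multiplying by $d_A^{-2}>0$ gives $F(U_{\mathrm{bh}})\leq [d_A^2(1-O_M(U_{\mathrm{bh}})-\eta)]^{-1}$.

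I do not expect a genuine obstacle here; the argument is a short chain built from~\eqref{Eq:OTOCFidelity}, the definition of $O_M$, and the hypothesis on $\eta$. The only points that require care are (i) verifying that the termwise bound $|\OTOC(U_{\mathrm{bh}})|\geq 1-O_M(U_{\mathrm{bh}})$ really applies, which works because the Hayden--Preskill OTOC is built from Pauli operators that are special instances of the generic $n$-qubit Paulis in Definition~\ref{Lemma:LinearMagic}; and (ii) the sign bookkeeping when passing to reciprocals, which is why one records $\av{P_A,P_D}\OTOC(U_{\mathrm{bh}})>0$ from~\eqref{Eq:OTOCFidelity} at the outset.
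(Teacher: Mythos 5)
Your proposal is correct and follows essentially the same route as the paper: the termwise bound $1-\abs{\OTOC(U_{\mathrm{bh}})}\leq O_M(U_{\mathrm{bh}})$ averaged over $P_A,P_D$ is exactly the paper's ``max $\geq$ average'' step, and the remaining algebra via Eq.~\eqref{Eq:OTOCFidelity} and the hypothesis on $\eta$ is identical. Your extra care about positivity when taking reciprocals is a minor refinement the paper glosses over.
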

We prove this result in Appendix~\ref{Sec:ProofThmHP}. 
We conjecture that the OTOC magic can also bound the fluctuations of $F(U_{\mathrm{bh}})$ when $U_{\mathrm{bh}}$ is sampled from a unitary ensemble. 
The effect of magic in a related black hole learning problem has also been investigated \cite{Leone2022Mocking}. We also establish an inequality relating the decoding fidelity and the Pauli growth of $U_{\mathrm{bh}}$ (see Appendix~\ref{Sec:ProofTheoremHPPAuliGrowth} for a proof).

\begin{theorem}\label{Theorem:HPPauliGrowth}
Let $D$ be the $n$-th qubit and let $A$ be any other single-qubit system. Let $\av{A}$ denote the uniform average over all such $A$ systems. In the large $n$ limit, the decoding fidelity $F(U_{\mathrm{bh}})$ and the Pauli growth of $U_{\mathrm{bh}}$ satisfy the following inequality:
\begin{equation}
\begin{split}
\av{A}\frac{1}{F(U_{\mathrm{bh}})}\geq (d_A^2-1)\left[1-\frac{4}{3n}\left(G(U_{\mathrm{bh}})+1\right)\right]+1.
\end{split}
\end{equation}
\end{theorem}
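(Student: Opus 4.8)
The plan is to feed the Yoshida--Kitaev fidelity identity Eq.~\eqref{Eq:OTOCFidelity} into the OTOC/Pauli-growth inequality from Appendix~\ref{Sec:ProofOTOCScram}. First I would rewrite Eq.~\eqref{Eq:OTOCFidelity} as $1/F(U_{\mathrm{bh}})=d_A^2\,\av{P_A,P_D}\OTOC(U_{\mathrm{bh}})$ and take the uniform average over the single-qubit system $A$, obtaining $\av{A}\,1/F(U_{\mathrm{bh}})=d_A^2\,\av{A}\av{P_A,P_D}\OTOC(U_{\mathrm{bh}})$.

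Second, I would peel off the trivial Pauli contributions. If $P_A=I_A$ then $U_{\mathrm{bh}}^\dagger P_D U_{\mathrm{bh}}P_A U_{\mathrm{bh}}^\dagger P_D U_{\mathrm{bh}}P_A=U_{\mathrm{bh}}^\dagger P_D^2 U_{\mathrm{bh}}=\id$, so $\OTOC(U_{\mathrm{bh}})=1$ for every $P_D$, and likewise $\OTOC(U_{\mathrm{bh}})=1$ whenever $P_D=I_D$. Splitting the $P_A=I_A$ block out of the average then gives the exact identity $\av{A}\,1/F(U_{\mathrm{bh}})=1+(d_A^2-1)\,\av{A}\av{P_A\neq I_A}\av{P_D}\OTOC(U_{\mathrm{bh}})$, and since the $P_D=I_D$ term contributes its maximal value $1$ to the innermost expectation, $\av{P_D}\OTOC(U_{\mathrm{bh}})\geq\av{P_D\neq I_D}\OTOC(U_{\mathrm{bh}})$. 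Hence it suffices to prove $\av{A}\av{P_A\neq I_A}\av{P_D\neq I_D}\OTOC(U_{\mathrm{bh}})\geq 1-\tfrac{4}{3n}\bigl(G(U_{\mathrm{bh}})+1\bigr)$ in the large-$n$ limit.

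Third, that bound is precisely the $P_D$-averaged form of the OTOC/Pauli-growth inequality established in Appendix~\ref{Sec:ProofOTOCScram}: for any non-identity single-qubit Pauli $P_D$ on the $n$-th qubit, $P_D$ is an admissible competitor in Definition~\ref{Lemma:PauliMonotone} (it satisfies $W(P_D)=1$, $\tr{P_D}=0$, and $\norm{P_D}_2=1$), so $W(U_{\mathrm{bh}}^\dagger P_D U_{\mathrm{bh}})\leq G(U_{\mathrm{bh}})+1$; expanding the average Pauli weight into single-qubit commutator norms and using $\norm{[U_{\mathrm{bh}}^\dagger P_D U_{\mathrm{bh}},P_j]}_{\mathrm{HS}}^2=2^{n+1}\bigl(1-\OTOC(U_{\mathrm{bh}})\bigr)$ turns this into the stated bound on the $A$-averaged cross-OTOC, which holds for each such $P_D$ with the same right-hand side and therefore survives the average over the non-identity $P_D$. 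Substituting this into the identity of the second step yields $\av{A}\,1/F(U_{\mathrm{bh}})\geq 1+(d_A^2-1)\bigl[1-\tfrac{4}{3n}\bigl(G(U_{\mathrm{bh}})+1\bigr)\bigr]$, which is the claim.

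The main obstacle is organizing the Pauli averages so that the genuine cross-OTOCs---the ones actually controlled by the Pauli growth---are cleanly separated from the many terms that equal $1$ identically (all $P_A=I_A$ or $P_D=I_D$ terms) and from the ``self-qubit'' term that appears when $W(U_{\mathrm{bh}}^\dagger P_D U_{\mathrm{bh}})$ is expanded, in which the second Pauli is also supported on the $n$-th qubit rather than on $A$. In the Appendix this last term is discarded using $\abs{\OTOC(U_{\mathrm{bh}})}\leq 1$, which effectively costs one qubit and is the reason the statement is phrased in the large-$n$ limit; the trivial terms, by contrast, only help, since each contributes the maximal OTOC value. Everything else is the linear rearrangement above and needs no further estimates.
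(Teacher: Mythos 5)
Your proposal is correct and follows essentially the same route as the paper's proof: both rest on the large-$n$ relation $\av{A}\av{P_A\neq I_A}\OTOC(U_{\mathrm{bh}})=1-\tfrac{4}{3n}W(U_{\mathrm{bh}}^\dagger P_D U_{\mathrm{bh}})$ from Appendix~\ref{Sec:ProofOTOCScram}, the bound $W(U_{\mathrm{bh}}^\dagger P_D U_{\mathrm{bh}})\leq G(U_{\mathrm{bh}})+1$, the conversion between $\av{P_A\neq I_A}$ and $\av{P_A}$ via the identity term, and Eq.~\eqref{Eq:OTOCFidelity}. The only differences are cosmetic: you work backward from the fidelity identity and treat the $P_D=I_D$ term separately, whereas the paper averages over all $P_D$ directly, which is harmless since $W(I)=0\leq G+1$.
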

More generally, $D$ can be a randomly selected, single-qubit system.

\section{Conclusion}
We have introduced a resource theory of scrambling comprised of two mechanisms. In the entanglement scrambling mechanism, we introduce the Pauli growth as a monotone to measure operator spreading.
In the magic scrambling mechanism, we introduce the OTOC magic as a monotone to measure the generation of operator entanglement.
We use these monotones to bound the OTOC fluctuations measured in Google's experiment~\cite{Mi_2021} and to bound the success of the Hayden-Preskill decoding protocol. These applications provide an operational interpretation of our resource monotones. 

We propose that these monotones may also be used to bound other scrambling tools, such as the tripartite mutual information, the operator entanglement entropy, and Lyapunov exponents in OTOC dynamics. 
Furthermore, we conjecture that this scrambling resource theory can be generalized to quantum channels, which may be useful in understanding noise effects. 

\acknowledgements{The authors thank You Zhou and Aram Harrow for insightful discussions. We are grateful to Renato Renner for valuable comments and suggestions on an earlier draft of this paper. This work was supported in part by ARO Grant W911NF-19-1-0302 and ARO MURI Grant W911NF-20-1-0082, and NSF Eager Grant 2037687.}

\bibliography{Bib}

\begin{appendix}

\newpage

\onecolumngrid

\section{Relation between scrambling mechanisms}
To establish a relation between the two scrambling mechanisms, we use the magic entropy, introduced in \cite{Bu2022}:
\begin{equation}
    M(U)=\max_{P_{\vec{a}}\in\mathcal{P}_d^{\otimes n}, W(P_{\vec{a}})=1}H(U^\dagger P_{\vec{a}}U),
\end{equation}
where ${H(O)=-\sum_{\vec{a}\in \mathcal{V}_d^n}P_O[\vec{a}]\mathrm{log}P_O[\vec{a}]}$ is the quantum Fourier entropy (QFE)
of an $n$-qudit operator $O$ satisfying ${\norm{O}_2=1}$.
The magic entropy is faithful in that ${M(V)\geq 0}$ for any unitary $V$ and $M(U)=0$ iff $U$ is a Clifford unitary. It also satisfies right-invariance, meaning $M(VU)=M(V)$ for any unitary $V$ and a Clifford unitary $U$.
The QFE, and subsequently the magic entropy, quantifies operator entanglement, as it can be large when a unitary maps a Pauli operator to a uniform sum of many Pauli operators. 
Assuming that the quantum Fourier entropy-influence conjecture \cite{Bu2022} holds,  i.e. $H(O)\leq cW(O)$ for a Hermitian, $n$-qubit operator $O$ satisfying $O^2=I$ and some constant $c$, then
\begin{eqnarray}
M(U)\leq c(G(U)+1).
\end{eqnarray}

\begin{proof}
From the quantum Fourier entropy-infuence conjecture,
\begin{equation}
\begin{split}
	H(U^\dagger P_{\vec{a}} U)
	&\leq cW(U^\dagger P_{\vec{a}} U)\\
	&= c(W(U^\dagger P_{\vec{a}} U)-1+1).
\end{split}
\end{equation}
Taking the maximization over all weight-1 Pauli operators yields
\begin{equation}
\begin{split}
	\max_{P_{\vec{a}}\in\mathcal{P}_d^{\otimes n}, W(P_{\vec{a}})=1} H(U^\dagger P_{\vec{a}} U)
	&\leq c \left[\max_{P_{\vec{a}}\in\mathcal{P}_d^{\otimes n}, W(P_{\vec{a}})=1} (W(U^\dagger P_{\vec{a}} U)-1)+1\right]\\
	&\leq c \left[\max_{\substack{O:  \norm{O}_2=1, W(O)=1,\\\tr{O}=0}}(W(U^\dagger O U)-1)+1\right].
\end{split}
\end{equation}
This implies that
\begin{equation}
	M(U)\leq c(G(U)+1).
\end{equation}

\end{proof}

\section{Pauli growth is an entanglement scrambling monotone}
\label{Sec:PauliGrowth}
This proof is based on results in~\cite{Bu2022}. We first prove that $G$ is faithful. First, we show that $G(V)\geq 0$ for any unitary $V$. Consider a traceless, normalized operator $O$. The operator $V^\dagger O V$ must also be traceless and normalized. That is, the Pauli expansion of $V^\dagger O V$ contains no identity term. Hence each Pauli operator in the expansion has a Pauli weight of at least 1. The average Pauli weight of $V^\dagger O V$ is therefore lower bounded by 1. Explicitly,
\begin{equation}\label{Eq:Weightbound}
\begin{split}
	W(V^\dagger O V)
	&=\sum_{\vec{a}\in \mathcal{V}_d^n}\abs{\vec{a}}P_{V^\dagger OV}[\vec{a}]\\
		&=\sum_{\substack{\vec{a}\in \mathcal{V}_d^n,\\ \vec{a}\neq (0,0)^n}}\abs{\vec{a}}P_{V^\dagger OV}[\vec{a}]\\
		&\geq \sum_{\substack{\vec{a}\in \mathcal{V}_d^n,\\ \vec{a}\neq (0,0)^n}} \min_{\substack{\vec{a}\in \mathcal{V}_d^n,\\ \vec{a}\neq (0,0)^n}}\left[\abs{\vec{a}}\right]P_{V^\dagger OV}[\vec{a}]\\
		&= \sum_{\substack{\vec{a}\in \mathcal{V}_d^n,\\ \vec{a}\neq (0,0)^n}} P_{V^\dagger OV}[\vec{a}]\\
	&=1.
\end{split}
\end{equation}
In the second line, we use that $P_{V^\dagger OV}[(0,0)^n]=0$, since $V^\dagger O V$ has no identity component. Since $W(V^\dagger O V)\geq 1$ for any traceless, normalized operator $O$, it follows from the definition of the Pauli growth that $G(V)\geq 0$.

Now define $U$ such that $G(U)=0$. We will prove that $U$ must be generated by a product of swap gates and single-qudit unitaries.  Since $G=0$, then 

\begin{equation}
\max_{\substack{O:  \norm{O}_2=1, W(O)=1,\\\tr{O}=0}}\left[W(U^\dagger O U)-1\right]= 0.
\end{equation}
Hence, if $O$ is a traceless, normalized operator satisfying $W(O)=1$, then ${W(U^\dagger O U)\leq1}$.  From Ineq.~\ref{Eq:Weightbound}, it follows that ${W(U^\dagger O U)\geq1}$. These two conditions require that ${W(U^\dagger O U)=1}$. 

We now consider the case where $O$ is a weight-1 Pauli operator. Let $X_{1}$ denote an $n$-qudit Pauli operator which acts $X$ only on the first qudit. For simplicity, we refer to this as a local operator. Since $\tr{U^\dagger X_{1} U}$=0, then $U^\dagger X_{1} U$ has no identity component in its Pauli expansion. This, along with the condition $W(U^\dagger X_1 U)=1$, implies that the expansion of $U^\dagger X_{1} U$ in the Pauli basis must only contain weight-1 Pauli operators. Therefore,
\begin{equation}\label{Eq:XEvolved}
	U^\dagger X_1U = \sum_{i=1}^n\left[\alpha_i Q_i^X+\beta_iQ_i^Y +\gamma_iQ_i^Z\right],
\end{equation}
where $Q_i^X=\sum_{j=1}^{d-1}c_{ij}X_i^j$, $Q_i^Y=\sum_{j,k=1}^{d-1}c_{ijk}X_i^jZ_i^k$, and $Q_i^Z=\sum_{j=1}^{d-1}c_{ij}Z_i^j$. The operator $U^\dagger X_1U$ satisfies ${\abs{U^\dagger X_1U}^2=I}$. Consider the qubit case where $d=2$, so that $(U^\dagger X_1U)^2=I$. When expanding $(U^\dagger X_1U)^2$ in the Pauli basis, for fixed integers $l$ and $l'$ with $l\neq l'$, we get terms like $\alpha_l \alpha_{l'} Q_l^X\otimes Q_{l'}^X$ and $2\alpha_l \beta_{l'} Q_l^X\otimes Q_{l'}^Y$.  These terms must vanish to ensure that $(U^\dagger X_1U)^2=I$. Hence, for the qubit case, there exists an integer $l$ such that
\begin{equation}\label{Eq:XSimple}
	U^\dagger X_1U = \alpha_l Q_l^X+\beta_lQ_l^Y +\gamma_lQ_l^Z.
\end{equation}
Therefore, $U^\dagger X_1U $ is a local operator, as it acts non-trivially only on the $l$-th qubit. Now take the qudit case where ${d>2}$. Again taking Eq.~\eqref{Eq:XEvolved} and expanding $\abs{U^\dagger X_1U}^2$ in the Pauli basis, we get terms like ${\alpha_l ^* \alpha_{l'}Q_l^{X\dagger}  \otimes Q_{l'}^X}$ and ${\alpha_l ^* \beta_{l'}Q_l^{X\dagger}  \otimes Q_{l'}^Y}$. These terms must vanish since $\abs{U^\dagger X_1U}^2=I$. Therefore, in the qudit case, Eq.~\eqref{Eq:XSimple} still holds. Similarly, for a fixed integer $m$,
\begin{equation}
	U^\dagger Z_1U = \alpha_m' Q_m^X+\beta_m'Q_m^Y +\gamma_m' Q_m^Z,
\end{equation}
where the coefficients $\alpha_m'$, $\beta_m'$, and $\gamma_m'$ are defined similarly to those in Eq.~\eqref{Eq:XEvolved}.
If $l\neq m$, then $[U^\dagger X_1U,U^\dagger Z_1U]=0$, implying that $[X_1,Z_1]=0$. This commutator cannot vanish, hence $l=m$. The operator $U^\dagger X_1Z_1 U$ is also a local operator since $U^\dagger X_1Z_1 U=(U^\dagger X_1U)(U^\dagger Z_1U)$. In general, for a weight-one Pauli operator $P_1\in \mathcal{P}_2^{\otimes n}$, $U^\dagger P_1 U$ is a local operator acting non-trivially on site $l$. Therefore, we can write
\begin{equation}
\begin{split}
	U^\dagger P_1 U = \mathbb{SW}_{1,l}(V_1^\dagger \otimes V_{(1)}'^\dagger)P_1(V_1\otimes V_{(1)}')\mathbb{SW}_{1,l},
\end{split}
\end{equation}
where $\mathbb{SW}_{1,l}$ is the swap operator between site $1$ and site $l$, $V_1$ is a unitary defined on the first qudit, and $V_{(1)}'$ is a unitary defined on the remaining qudits. Unitaries $V_1$ and $V_{(1)}'$ must have a vanishing $G$. Therefore $U=\left(V_1\otimes V_{(1)}'\right)\mathbb{SW}_{1,l}$.

 Let $P_2$ be a Pauli operator acting non-trivially only on the second site. Unitary $V_{(1)}'$ maps $P_2$ to an operator on the site $m\neq 1$. Then
 \begin{equation}
 \begin{split}
 	U^\dagger P_2 U 
 	&= \mathbb{SW}_{1,l}(V_1^\dagger \otimes V_{(1)}'^\dagger)P_2(V_1\otimes V_{(1)}')\mathbb{SW}_{1,l}\\
  	&= \mathbb{SW}_{1,l}\mathbb{SW}_{2,m}(V_1^\dagger \otimes V_2^\dagger \otimes V_{(2)}'^\dagger )P_2(V_1\otimes V_{2}\otimes V_{(2)}')\mathbb{SW}_{2,m}\mathbb{SW}_{1,l},
\end{split}
 \end{equation}
 where $V_2$ is a unitary defined on the second qudit and $V_{(2)}' $ is a unitary defined on the remaining $n-2$ qudits. Both $V_2$ and $V_{(2)}'$ have a vanishing $G$. Therefore $U=(V_1\otimes V_2 \otimes V_{(2)}')\mathbb{SW}_{2,m}\mathbb{SW}_{1,l}$. 
 We can continue this process by letting $U$ act on $P_k$, a Pauli operator which acts non-trivially only on the $k$-th site, until we find that, if $G(U)=0$, then $U$ is a product of swap operators and single-qudit unitaries. 
 
 Next, we prove that if $U$ is generated by swap operators and single-qudit unitaries, then $G(U)=0$. The average Pauli weight of a normalized operator $O$ can be written as
\begin{equation}
\begin{split}
	W( O )
	&=\sum_{j\in[n]}\sum_{\vec{a}:a_j \neq (0,0)}P_{O}[\vec{a}]\\
	&=\sum_{j\in[n]}\frac{1}{d^{2n}}\sum_{\vec{a}:a_j \neq (0,0)}\abs{\tr{O  P_{\vec{a}}}}^2\\
	&=\sum_{j\in[n]}\left[\frac{1}{d^{2n}}\sum_{\vec{a}}\abs{\tr{O P_{\vec{a}}}}^2-\frac{1}{d^{2n}}\sum_{\vec{a}:a_j=(0,0)}\abs{\tr{O P_{\vec{a}}}}^2\right]\\
	&=\sum_{j\in[n]}\left[\frac{1}{d^n}\tr{\abs{O}^2}-\frac{1}{d^2}\frac{1}{d^{n-1}}\ptr{\overline{j}}{\abs{\ptr{j}{O}}^2}\right]\\
	&=\sum_{j\in[n]}\frac{1}{d^n}\left[\tr{\abs{O}^2}-\frac{1}{d}\ptr{\overline{j}}{\abs{\ptr{j}{O}}^2}\right].
\end{split}
\end{equation}
The average Pauli weight of $U^\dagger OU$ when $U=\otimes_{i=1}^nU_i$ is
\begin{equation}\label{Eq:SingleWeight}
\begin{split}
W( U^\dagger O U)
	&=\sum_{j\in[n]}\frac{1}{d^n}\left[\tr{\abs{ U^\dagger O U}^2}-\frac{1}{d}\ptr{\overline{j}}{\abs{\ptr{j}{ U^\dagger O U}}^2}\right]\\
	&=\sum_{j\in[n]}\frac{1}{d^n}\left[\tr{\abs{ O }^2}-\frac{1}{d}\ptr{\overline{j}}{\abs{\ptr{j}{  O }}^2}\right]\\
	&=W(O).
\end{split}
\end{equation}
The average Pauli weight of $U^\dagger OU$ when $U$ is as a product of swap operators is
\begin{equation}
\begin{split}\label{Eq:PermWeight}
W(U^\dagger O U)
	&=\sum_{j\in[n]}\sum_{\vec{a}:a_j \neq (0,0)}P_{U^\dagger O U}[\vec{a}]\\
	&=\sum_{j\in[n]}\frac{1}{d^{2n}}\sum_{\vec{a}:a_j \neq (0,0)}\abs{\tr{U^\dagger O U  P_{\vec{a}}}}^2\\
	&=\sum_{j\in[n]}\frac{1}{d^{2n}}\sum_{\vec{a}:a_j \neq (0,0)}\abs{\tr{O U  P_{\vec{a}}U^\dagger }}^2\\
	&=\sum_{k\in[n]}\frac{1}{d^{2n}}\sum_{\vec{a}:a_k \neq (0,0)}\abs{\tr{O P_{\vec{a}}}}^2\\
	&=W(O).
\end{split}
\end{equation}
In line four, we use that $U$ permutes the position of the strictly non-identity, single-qudit Pauli operator from the site $j$ to the site $k$. The sum over $j$ is invariant under the relabeling $j\rightarrow k$. From Eqs.~\eqref{Eq:SingleWeight} and \eqref{Eq:PermWeight}, it follows that if $U$ is generated by swap operators and single-qudit unitaries, then $W(U^\dagger O U)=W(O)$ for any normalized operator $O$. Hence, $G(U)=0$. Therefore, $G(U)=0$ iff $U$ is generated by swap operators and single-qudit unitaries, making $G$ faithful.

Furthermore, by defining the function $G_{\mathrm{Pauli}}(U)\equiv \max_{P_{\vec{a}}\in \mathcal{P}_d^{\otimes n},  W(P_{\vec{a}})=1}\left[W(U^\dagger P_{\vec{a}} U)-1\right]$, one can also show through similar arguments that $G_{\mathrm{Pauli}}(U)=0$ iff $U$ is non-entangling. Since $W(U^\dagger P_{\vec{a}}U)\geq 1$ when $P_{\vec{a}}$ is a non-identity Pauli operator, then $G_{\mathrm{Pauli}}(U)=0$ iff $W(U^\dagger P_{\vec{a}}U)=1$ for all weight-1 Pauli operators $P_{\vec{a}}$. This implies that free unitaries, i.e. unitaries which map a weight-1 Pauli operator to an operator with an average Pauli weight of 1, are generated only by swap gates and single-qudit unitaries.

We now prove the invariance condition for the Pauli growth. Let $V$ be any unitary and take $U_1$ and $U_2$ to be non-entangling unitaries. Then
\begin{equation}
\begin{split}
	G(U_1VU_2)
	&=\max_{\substack{O:  \norm{O}_2=1, W(O)=1,\\\tr{O}=0}}\left[W(U_2^\dagger V^\dagger U_1^\dagger O U_1V U_2)-1\right]\\
	&=\max_{\substack{O:  \norm{O}_2=1, W(O)=1,\\\tr{O}=0}}\left[W(V^\dagger U_1^\dagger OU_1V)-1\right]\\
	&=\max_{\substack{O':  \norm{O'}_2=1, W(O')=1,\\\tr{O'}=0}}\left[W(V^\dagger O' V)
	-1\right]\\
	&=G(V).
\end{split}
\end{equation}
In the third line, we define $O'=U_1^\dagger OU_1$. This operator satisfies $\norm{O'}_2=1$, $W(O')=1$, and $\tr{O'}=0$. Hence, the maximization is invariant under the transformation $O\rightarrow O'$.

\section{Proof of the OTOC-Pauli growth inequality}\label{Sec:ProofOTOCScram}
In the large $n$ limit, the OTOC of $U$ and the average Pauli weight satisfy the following relation  \cite{Bu2022}: 
\begin{equation}
	{\av{A}\av{P_A\neq I_A}\OTOC(U)=1-\frac{4}{3n}W(U^\dagger P_DU)},
\end{equation}
where $D$ is the $n$-th qubit, $A$ is any other single-qubit subsystem, $\av{A}$ is the uniform average over all choices of $A$, and $\av{P_A\neq I_A}$ is the uniform average over all non-identity Pauli operators on $A$.
The connection between the average Pauli weight and the OTOC has also been considered for special quantum circuits, such as Brownian quantum circuits \cite{PhysRevE.99.052212}.

The Pauli growth can be bounded by the average OTOC:
\begin{equation}
\begin{split}
	G(U)
	&= \max_{\substack{O:  \norm{O}_2=1, W(O)=1,\\\tr{O}=0}}\left[W(U^\dagger O U)-1\right]\\
	&\geq  W(U^\dagger P_D U)-1\\
	&= 
	\frac{3n}{4}\left(1-\av{A}\av{P_A\neq I_A}\OTOC(U)\right)-1.
\end{split}
\end{equation}
Rewriting this in terms of the average OTOC,
\begin{equation}
\begin{split}
\av{A}\av{P_A\neq I_A}\OTOC(U)\geq 1-\frac{4}{3n}\left(G(U)+1\right).
\end{split}
\end{equation}

\section{OTOC magic is a magic monotone}\label{Proof:LemmaOTOCMagic}
We prove that $O_M$ is faithful. First, we show that $O_M(U)\geq 0$ for any unitary $U$. This follows from 
\begin{equation}
\begin{split}
    O_M(U)
    &=\max_{P_{\vec{a}},P_{\vec{b}}\in \mathcal{P}_2^{\otimes n}}\left[1-\abs{\OTOC(U)}\right]\\
    &=1-\min_{P_{\vec{a}},P_{\vec{b}}\in \mathcal{P}_2^{\otimes n}}\left[\abs{\OTOC(U)}\right]\\
    &=1-\min_{P_{\vec{a}},P_{\vec{b}}\in \mathcal{P}_2^{\otimes n}}\left[\abs{\langle U^\dagger P_{\vec{a}} U P_{\vec{b}} U^\dagger P_{\vec{a}} U P_{\vec{b}}\rangle}\right]\\
    &\geq 1-\abs{\langle U^\dagger P_{\vec{a}'} U P_{\vec{b}'} U^\dagger P_{\vec{a}'} U P_{\vec{b}'}\rangle}\\
     &\geq 1-\norm{U^\dagger P_{\vec{a}'} U P_{\vec{b}'} U^\dagger P_{\vec{a}'} U P_{\vec{b}'}}_{\infty}\\
     &= 1-1\\
     &= 0.
\end{split}
\end{equation}
In line four, the Pauli operators $P_{\vec{a}'},P_{\vec{b}'}\in \mathcal{P}_2^{\otimes n}$ are fixed.

We now show that $O_M(U)=0$ iff $U$ is Clifford. First, let $U$ satisfy $O_M(U)=0$, such that $\abs{\OTOC(U)}=1$. In the case where any two operators $A$ and $B$ satisfy $\langle A,A\rangle=\langle B, B \rangle=1$, then $\abs{\langle A, B \rangle}=1$ only if $A=e^{i\theta}B$ for a phase $\theta$. The OTOC is
\begin{equation}
\begin{split}
	\abs{\OTOC(U)}
	&=\abs{\langle U^\dagger P_{\vec{a}} U P_{\vec{b}} U^\dagger P_{\vec{a}} U P_{\vec{b}}\rangle}\\
	&=\abs{\langle U^\dagger P_{\vec{a}} U P_{\vec{b}} U^\dagger P_{\vec{a}} U , P_{\vec{b}}\rangle},
\end{split}
\end{equation}
where $\abs{\langle U^\dagger P_{\vec{a}} U P_{\vec{b}} U^\dagger P_{\vec{a}} U , P_{\vec{b}}\rangle}=1$ only if $ U^\dagger P_{\vec{a}} U P_{\vec{b}} U^\dagger P_{\vec{a}} U = e^{i\theta}P_{\vec{b}}$. This implies that
\begin{equation}
\begin{split}
	U^\dagger P_{\vec{a}} U P_{\vec{b}}
	& = e^{i\theta}P_{\vec{b}}U^\dagger P_{\vec{a}} U.
\end{split}
\end{equation}
We square this and take the trace
\begin{equation}
\begin{split}
\tr{(U^\dagger P_{\vec{a}} U P_{\vec{b}})^2 }
	& = \tr{(e^{i\theta}P_{\vec{b}}U^\dagger P_{\vec{a}} U)^2}\\
	& = e^{2i\theta}\tr{(U^\dagger P_{\vec{a}} UP_{\vec{b}})^2}.
\end{split}
\end{equation}
This equation is satisfied if $e^{2i\theta}=1$, implying that $e^{i\theta}=\pm 1$. Therefore,
\begin{equation}
\begin{split}
	U^\dagger P_{\vec{a}} U P_{\vec{b}}
	& = \pm P_{\vec{b}}U^\dagger P_{\vec{a}} U.
\end{split}
\end{equation}
Operators $P_{\vec{b}}$ and $U^\dagger P_{\vec{a}} U $ must either commute or anti-commute for all $P_{\vec{b}}$ and $P_{\vec{a}}$. This requires that $U^\dagger P_{\vec{a}} U $ be a Pauli operator, up to a phase. To prove this, we introduce the following lemma.

\begin{lemma}\label{Lemma:PauliCommute}
For any two, unique $n$-qubit Pauli operators $P_{\vec{a}}$ and $P_{\vec{b}}$, there exists a Pauli operator $P_{\vec{c}}$ which commutes with one and anti-commutes with the other.
\end{lemma}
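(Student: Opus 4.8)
The plan is to reduce the claim to a one-qubit statement using the tensor-product structure of Pauli operators. Since $P_{\vec a}$ and $P_{\vec b}$ are distinct, the strings $\vec a$ and $\vec b$ differ in at least one coordinate, say the $j$-th, so that the single-qubit Paulis $P_{a_j}$ and $P_{b_j}$ are distinct elements of $\{I,X,Y,Z\}$. First I would establish the following elementary fact: for any two distinct single-qubit Paulis $P\neq P'$ there is a single-qubit Pauli $Q$ that commutes with one of them and anti-commutes with the other. This is a short case check. If one of $P,P'$ is the identity, say $P=I$, take $Q$ to be either of the two non-identity Paulis that anti-commute with $P'\neq I$; then $Q$ commutes with $I$ and anti-commutes with $P'$. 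If both $P$ and $P'$ are non-identity, then being distinct they anti-commute, so $Q=P$ commutes with $P$ and anti-commutes with $P'$.

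Given such a $Q$ for the pair $(P_{a_j},P_{b_j})$, I would then define $P_{\vec c}$ to be the $n$-qubit Pauli operator equal to $Q$ on the $j$-th qubit and equal to the identity on every other qubit. Because the sign governing the commutation of two tensor-product Paulis is the product of the per-qubit commutation signs, and $P_{\vec c}$ acts as the identity on all qubits except the $j$-th, the sign relating $P_{\vec c}P_{\vec a}$ and $P_{\vec a}P_{\vec c}$ equals the per-qubit sign between $Q$ and $P_{a_j}$, and likewise for $\vec b$. Hence $P_{\vec c}$ commutes with $P_{\vec a}$ iff $Q$ commutes with $P_{a_j}$, and $P_{\vec c}$ commutes with $P_{\vec b}$ iff $Q$ commutes with $P_{b_j}$; by the choice of $Q$ these two outcomes differ, which is exactly the assertion of the lemma.

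An alternative route, which I could note but would not need to carry out, is the symplectic formulation: identify each Pauli string with a vector in $\mathbb{Z}_2^{2n}$ so that commutation is governed by the symplectic form $\omega$, in which case "$P_{\vec c}$ commutes with one and anti-commutes with the other" becomes the condition $\omega(\vec c,\vec a)+\omega(\vec c,\vec b)=\omega(\vec c,\vec a+\vec b)=1$; since $\vec a\neq\vec b$ forces $\vec a+\vec b\neq 0$ and $\omega$ is non-degenerate, such a $\vec c$ exists. I expect no genuine obstacle here. The only point that deserves care is the case split in the one-qubit fact when one of $P_{a_j},P_{b_j}$ is the identity, since then one cannot simply take $Q$ equal to one of the two operators as in the non-identity case.
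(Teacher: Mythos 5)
Your proof is correct and follows essentially the same route as the paper's: locate a site where the two strings differ, do the one-qubit case analysis (identity vs.\ two distinct non-identity Paulis), and extend by the identity elsewhere, using multiplicativity of the commutation sign over tensor factors. The symplectic remark is a valid and arguably cleaner alternative, but the main argument already matches the paper's proof.
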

\begin{proof}
Since $P_{\vec{a}}$ and $P_{\vec{b}}$ are unique Pauli operators, they must differ on at least one site. Let them differ on site $i$:
\begin{equation}
\begin{split}
	P_{\vec{a}}&=P_{a_i}\otimes P_{{\vec{a}},n-1},\\
	P_{\vec{b}}&=P_{b_i}\otimes P_{{\vec{b}},n-1},
\end{split}
\end{equation}
where, for example, $P_{a_i}$ is a Pauli operator on site $i$ and $P_{{\vec{a}},n-1}$ is a Pauli operator on the remaining sites. Also, $P_{a_i}\neq P_{b_i}$. Define $P_{\vec{c}}$ as $P_{\vec{c}}=P_{c_i}\otimes I^{\otimes n-1}$. If $P_{a_i}=I$, then defining $P_{c_i}$ to be a non-trivial Pauli operator such that $P_{c_i}\neq P_{b_i}$ ensures that $P_{\vec{c}}$ commutes with $P_{\vec{a}}$ and anti-commutes with $P_{\vec{b}}$. If $P_{b_i}=I$, then defining $P_{c_i}$ to be a non-trivial Pauli operator such that $P_{c_i}\neq P_{a_i}$ ensures that $P_{\vec{c}}$ commutes with $P_{\vec{b}}$ and anti-commutes with $P_{\vec{a}}$. If neither $P_{a_i}$ nor $P_{b_i}$ is $I$, then setting $P_{c_i}=P_{a_i}$ ensures that $P_{{\vec{c}}}$ commutes with $P_{\vec{a}}$ and anti-commutes with $P_{\vec{b}}$.
\end{proof}
The operator $U^\dagger P_{\vec{a}} U$ can be expanded in the Pauli basis:
\begin{equation}
	U^\dagger P_{\vec{a}} U=\sum_{{P_{\vec{c}}}\in \mathcal{P}^{\otimes n}_2}c_{\vec{c}} P_{\vec{c}}.
\end{equation}
However, for any two Pauli operators in the expansion, by Lemma~\ref{Lemma:PauliCommute} there exists a Pauli operator $P_{\vec{b}}$ that commutes with one Pauli operator and anti-commutes with the other. This implies that $P_{\vec{b}}$ neither commutes nor anti-commutes with $U^\dagger P_{\vec{a}} U$. In order for $U^\dagger P_{\vec{a}} U$ to commute or anti-commute with all Pauli operators $P_{\vec{b}}$, it must be proportional to a single Pauli operator, $P_{\vec{c}}$:
\begin{equation}
	U^\dagger P_{\vec{a}} U=e^{i\theta_2} P_{\vec{c}}.
\end{equation}
This must be true for all $P_{\vec{a}}\in \mathcal{P}_2^{\otimes n}$, implying that $U$ is Clifford. Therefore, if $O_M(U)=0$, then $U$ is a Clifford unitary.

Now we prove the converse. If $U$ is a Clifford unitary, then
\begin{equation}
\begin{split}
O_M(U)
	&=\max_{P_{\vec{a}},P_{\vec{b}}  \in \mathcal{P}_2^{\otimes n}}\left[
	1-\abs{\OTOC(U)}\right]\\
	&=\max_{P_{\vec{a}},P_{\vec{b}}\in \mathcal{P}_2^{\otimes n}}\left[
	1-\abs{\langle U^\dagger P_{\vec{a}} U P_{\vec{b}}  U^\dagger P_{\vec{a}} U  P_{\vec{b}}\rangle}\right]\\
	&=\max_{P_{\vec{c}},P_{\vec{b}}\in \mathcal{P}_2^{\otimes n}}\left[
	1-\abs{\langle P_{\vec{c}}P_{\vec{b}}  P_{\vec{c}}  P_{\vec{b}}\rangle}\right]\\
	&=\max_{P_{\vec{c}},P_{\vec{b}}\in \mathcal{P}_2^{\otimes n}}\left[
	1-\abs{\pm\langle P_{\vec{c}}^2 P_{\vec{b}}^2\rangle}\right]\\
	&=0,
\end{split}
\end{equation}
where in line three, we use that $U^\dagger P_{\vec{a}}U=e^{i\theta_2} P_{\vec{c}}$ is a Pauli operator and we use that $U^\dagger \mathcal{P}_2^{\otimes n} U=\mathcal{P}_2^{\otimes n}$ . In line four, we use that Pauli operators either commute or anti-commute with each other. Hence, $O_M(U)=0$ iff $U$ is Clifford, making $O_M$ faithful.

We now prove invariance. Let $V$ be any unitary and let $U_1$ and $U_2$ be Clifford unitaries. Then
\begin{equation}
\begin{split}
O_M(U_1VU_2)
	&=\max_{P_{\vec{a}},P_{\vec{b}}  \in \mathcal{P}_2^{\otimes n}}\left[
	1-\abs{\OTOC(U_1VU_2)}\right]\\
	&=\max_{P_{\vec{a}},P_{\vec{b}}\in \mathcal{P}_2^{\otimes n}}\left[
	1-\abs{\langle U_2^\dagger V^\dagger  U_1^\dagger P_{\vec{a}} U_1VU_2 P_{\vec{b}} U_2^\dagger V^\dagger  U_1^\dagger P_{\vec{a}} U_1 VU_2 P_{\vec{b}}\rangle}\right]\\
	&=\max_{P_{\vec{c}},P_{\vec{d}}\in \mathcal{P}_2^{\otimes n}}\left[
	1-\abs{\langle V^\dagger  P_{\vec{c}}V P_{\vec{d}} V^\dagger  P_{\vec{c}}V P_{\vec{d}}\rangle}\right]\\
	&=O_M(V).
\end{split}
\end{equation}
In line three, we use $U_1^\dagger P_{\vec{a}} U_1\propto P_{\vec{c}}$ and $U_2 P_{\vec{b}} U_2^\dagger \propto P_{\vec{d}}$ (up to a phase). Therefore the OTOC magic is a magic monotone.

\section{OTOC magic of non-Clifford gates in the third level of the Clifford hierarchy}\label{Sec:ProofCliffordHierarchy}
If $U\in \mathcal{C}^{(3)}\backslash \mathcal{C}^{(2)}$, then
\begin{equation}
\begin{split}
1-\abs{\OTOC(U)}
	&=1-\abs{\langle U^\dagger P_{\vec{a}} U P_{\vec{b}} U^\dagger P_{\vec{a}} U P_{\vec{b}}\rangle}\\
	&=1-\abs{\langle U^\dagger P_{\vec{a}}^\dagger U P_{\vec{b}} U^\dagger P_{\vec{a}} U P_{\vec{b}}\rangle}\\
	&=1-\abs{\langle U_{\mathrm{Cl}}^\dagger P_{\vec{b}} U_{\mathrm{Cl}} P_{\vec{b}}\rangle}\\
	&=1-\abs{\langle P_{\vec{c}}P_{\vec{b}}\rangle}\\
	&=1-\delta_{P_{\vec{c}},P_{\vec{b}}}.
\end{split}
\end{equation}
In the second line, we use that $P_{\vec{a}}$ is Hermitian. In the third line, $U^\dagger P_{\vec{a}} U=e^{i\theta} U_{\mathrm{Cl}}$ is a Clifford unitary. In the fourth line, $U_{\mathrm{Cl}}^\dagger P_{\vec{b}} U_{\mathrm{Cl}}=e^{i\theta_2}P_{\vec{c}}$ is a Pauli operator. The last line indicates that $1-\abs{\OTOC(U)}\in\{0,1\}$. Therefore $\mathrm{max}_{P_{\vec{a}},P_{\vec{b}}\in \mathcal{P}_2^{\otimes n}}[1-\abs{\OTOC(U)}]\in\{0,1\}$. However, $O_M(U)=0$ iff $U$ is Clifford. Since $U$ is non-Clifford, then $O_M(U)=1$.

\section{OTOC magic of the phase gate}\label{Sec:ProofPropepsilon}
The Pauli expansion of $U_{\varepsilon}$ is
\begin{equation}
\begin{split}
    U_{\varepsilon}&=\ket{0}\bra{0}+e^{i\varepsilon}\ket{1}\bra{1}\\
    &=\tfrac{1}{2}(I+Z)+\tfrac{e^{i\varepsilon}}{2}(I-Z)\\
    &=\tfrac{1}{2}(1+e^{i\varepsilon})I+\tfrac{1}{2}(1-e^{i\varepsilon})Z\\
    &=\tfrac{e^{i\varepsilon/2}}{2}\left[\left(e^{-i\varepsilon/2}+e^{i\varepsilon/2}\right)I+\left(e^{-i\varepsilon/2}-e^{i\varepsilon/2}\right)Z\right]\\
     &=e^{i\varepsilon/2}\left[\frac{1}{2}\left(e^{i\varepsilon/2}+e^{-i\varepsilon/2}\right)I-\frac{i}{2i}\left(e^{i\varepsilon/2}-e^{-i\varepsilon/2}\right)Z\right]\\
    &=e^{i\varepsilon/2}\left[\cos(\tfrac{\varepsilon}{2})I-i\sin(\tfrac{\varepsilon}{2})Z\right].
\end{split}
\end{equation}
Defining the two single-qubit Pauli operators $P_a,P_b\in \mathcal{P}_2$, the magnitude of the OTOC is
\begin{equation}
    \begin{split}
    \abs{\OTOC(U_{\varepsilon})}
    &=\abs{\left\langle U_{\varepsilon}^\dagger P_aU_{\varepsilon} P_b U_{\varepsilon}^\dagger P_a U_{\varepsilon} P_b\right\rangle}\\
    &=\abs{\left\langle (U_{\varepsilon}^\dagger P_a U_{\varepsilon} P_b)^2\right\rangle}\\
    &=\abs{\left\langle \left(e^{-i\varepsilon/2}\left[\cos(\tfrac{\varepsilon}{2})I+i\sin(\tfrac{\varepsilon}{2})Z\right]P_a e^{i\varepsilon/2}\left[\cos(\tfrac{\varepsilon}{2})I-i\sin(\tfrac{\varepsilon}{2})Z\right] P_b\right)^2\right\rangle}\\
    &=\abs{\left\langle \left(\left[\cos(\tfrac{\varepsilon}{2})I+i\sin(\tfrac{\varepsilon}{2})Z\right]P_a \left[\cos(\tfrac{\varepsilon}{2})I-i\sin(\tfrac{\varepsilon}{2})Z\right] P_b\right)^2\right\rangle}\\
    &=\abs{\left\langle \left(\left[\cos(\tfrac{\varepsilon}{2})I+i\sin(\tfrac{\varepsilon}{2})Z\right] \left[\cos(\tfrac{\varepsilon}{2})I-if(P_a,Z)\sin(\tfrac{\varepsilon}{2})Z\right] P_aP_b\right)^2\right\rangle}\\
    &=\abs{\left\langle \left(\left[(\cos^2(\tfrac{\varepsilon}{2})+f(P_a,Z)\sin^2(\tfrac{\varepsilon}{2}))I+i\cos(\tfrac{\varepsilon}{2})\sin(\tfrac{\varepsilon}{2})(1-f(P_a,Z))Z\right]P_aP_b\right)^2\right\rangle}\\
    &=\abs{\left\langle \left(\left[AI+BZ\right]P_aP_b\right)^2\right\rangle}\\
    &=\abs{\left\langle \left[AI+BZ\right]P_aP_b\left[AI+BZ\right]P_aP_b\right\rangle}\\
    &=\abs{\left\langle \left[AI+BZ\right]\left[AI+f(P_a,Z)f(P_b,Z)BZ\right]P_aP_bP_aP_b\right\rangle}\\
    &=\abs{f(P_a,P_b)\left\langle \left[AI+BZ\right]\left[AI+f(P_a,Z)f(P_b,Z)BZ\right]P_a^2P_b^2\right\rangle}\\
    &=\abs{\left\langle (A^2+f(P_a,Z)f(P_b,Z)B^2)I+AB(1+f(P_a,Z)f(P_b,Z))Z\right\rangle}\\
    &=\abs{ A^2+f(P_a,Z)f(P_b,Z)B^2}.
    \end{split}
\end{equation}
In line five, we define $f(P_a,Z)$ such that $f(P_a,Z)=1$ or $ -1$ if $P_a$ and $Z$ commute or anti-commute, respectively. In line six, we define $A=\cos^2(\tfrac{\varepsilon}{2})+f(P_a,Z)\sin^2(\tfrac{\varepsilon}{2})$ and $B=i\cos(\tfrac{\varepsilon}{2})\sin(\tfrac{\varepsilon}{2})(1-f(P_a,Z))$.

We evaluate $\abs{\OTOC(U_{\varepsilon})}$ for all values of $f(P_a,Z)$ and $f(P_b,Z)$. For $f(P_a,Z)=1$, $A=\cos^2(\tfrac{\varepsilon}{2})+\sin^2(\tfrac{\varepsilon}{2})=1$ and $B=0$, yielding
\begin{equation}
\begin{split}
    \abs{\OTOC(U_{\varepsilon})}&=1.
\end{split}
\end{equation}
For $f(P_a,Z)=-1$, $A=\cos^2(\tfrac{\varepsilon}{2})-\sin^2(\tfrac{\varepsilon}{2})=\cos(\varepsilon)$ and $B=2i\cos(\tfrac{\varepsilon}{2})\sin(\tfrac{\varepsilon}{2})$, yielding
\begin{equation}
\begin{split}
    \abs{\OTOC(U_{\varepsilon})}
    &=\abs{ A^2-f(P_b,Z)B^2}\\
    &=\abs{ \cos^2(\varepsilon)-f(P_b,Z)(2i\cos(\tfrac{\varepsilon}{2})\sin(\tfrac{\varepsilon}{2}))^2}\\
    &=\abs{ \cos^2(\varepsilon)+f(P_b,Z)\sin^2(\varepsilon)}.
\end{split}
\end{equation}
Taking $f(P_a,Z)=-1$ and $f(P_b,Z)=1$,
\begin{equation}
\begin{split}
    \abs{\OTOC(U_{\varepsilon})}
    &= \abs{\cos^2(\varepsilon)+\sin^2(\varepsilon)}\\
    &=1.
\end{split}
\end{equation}
Taking $f(P_a,Z)=-1$ and $f(P_b,Z)=-1$,
\begin{equation}
\begin{split}
    \abs{\OTOC(U_{\varepsilon})}
    &= \abs{\cos^2(\varepsilon)-\sin^2(\varepsilon)}\\
    &=\abs{\cos(2\varepsilon)}.
\end{split}
\end{equation}
Therefore, $\abs{\OTOC(U_{\varepsilon})}\in \{1,\abs{\cos(2\varepsilon)}\}$.
The OTOC magic of $U_{\varepsilon}$ is 
\begin{equation}
\begin{split}
    O_M(U_{\varepsilon})
    &=\max_{P_a,P_b\in \mathcal{P}_2}[1-\abs{\OTOC(U_{\varepsilon})}]\\
    &=\max\{0,1-\abs{\cos(2\varepsilon)}\}\\
    &=1-\abs{\cos(2\varepsilon)}.
\end{split}
\end{equation}
Taking $\varepsilon_k=\frac{\pi}{2^{k-1}}$, it follows that
\begin{equation}
\begin{split}
    O_M(U_{\varepsilon_k})
    &=1-\abs{\cos\left(\tfrac{\pi}{2^{k-2}}\right)}.
\end{split}
\end{equation}

We also prove that $U_{\varepsilon_k}\in \mathcal{C}^{(k)}$. As shown in Ref.~\cite{PhysRevA.95.012329}, the unitary
\begin{equation}
    J_k(j')=\sum_{\substack{j=0,\\j\neq j'}}^{d-1}\ket{j}\bra{j}+e^{2\pi i/d^k}\ket{j'}\bra{j'}
\end{equation}
satisfies $J_k(j')\in\mathcal {C}_{\mathrm{diag}}^{((d-1)(k-1)+(d-1))}=\mathcal {C}_{\mathrm{diag}}^{((d-1)k)}\subset \mathcal {C}^{((d-1)k)}$, where $\mathcal {C}_{\mathrm{diag}}^{((d-1)k)}$ is the set of diagonal gates in the $(d-1)k$ level of the Clifford hierarchy. Taking $d=2$ yields $J_k(1)=U_{\varepsilon_k}\in \mathcal{C}^{(k)}$, where $\varepsilon_k=\frac{\pi}{2^{k-1}}$.

\section{Proof of Theorem~\ref{Thm:Fluctuations}}\label{Proof:Fluctuations}
\begin{theorem}
It holds that
\begin{equation}
	\av{U\sim \mathcal{E}}O_M(U)\geq 1-\delta-\abs{\av{U\sim \mathcal{E}}\OTOC(U)}.
\end{equation}
Moreover, if 
$\av{U\sim \mathcal{E}}\OTOC(U)\rightarrow 0$, then 
\begin{equation}
	\av{U\sim \mathcal{E}}O_M(U)\geq 1-\delta.
\end{equation}
\end{theorem}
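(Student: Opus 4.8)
The plan is to use the fact that $O_M(U)$ is a maximum over Pauli pairs, so that for the particular operators $P_A,P_D$ entering the definition of $\delta$ (in the numerical example, $X_1$ and $Z_n$) one has the pointwise bound $O_M(U)\ge 1-\abs{\OTOC(U)}$, with the very same $\OTOC$ appearing on both sides. Averaging this inequality over $U\sim\mathcal{E}$ gives
\begin{equation}
	\av{U\sim \mathcal{E}}O_M(U)\ge 1-\av{U\sim \mathcal{E}}\abs{\OTOC(U)},
\end{equation}
so the entire task reduces to bounding $\av{U\sim \mathcal{E}}\abs{\OTOC(U)}$ from above by $\delta+\abs{\av{V\sim \mathcal{E}}\OTOC(V)}$.

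For that bound, set $\mu=\av{V\sim \mathcal{E}}\OTOC(V)$ and apply the triangle inequality in the form $\abs{\OTOC(U)}\le \abs{\OTOC(U)-\mu}+\abs{\mu}$; averaging over $U\sim\mathcal{E}$ yields $\av{U\sim \mathcal{E}}\abs{\OTOC(U)}\le \av{U\sim \mathcal{E}}\abs{\OTOC(U)-\mu}+\abs{\mu}$. Then use Jensen's inequality (equivalently Cauchy--Schwarz) on the first term, $\av{U\sim \mathcal{E}}\abs{\OTOC(U)-\mu}\le \sqrt{\av{U\sim \mathcal{E}}\abs{\OTOC(U)-\mu}^2}=\delta$, which is exactly the definition of $\delta$. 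Chaining the three inequalities gives $\av{U\sim \mathcal{E}}O_M(U)\ge 1-\delta-\abs{\mu}$, which is the first displayed claim; the second displayed claim then follows immediately by taking the limit $\av{U\sim \mathcal{E}}\OTOC(U)\to 0$, so that $\abs{\mu}\to 0$.

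The argument is elementary and I do not expect a genuine obstacle. The only point requiring care is bookkeeping of the Pauli operators: the $\OTOC$ in the definition of $\delta$ is built from one fixed pair $(P_A,P_D)$, and it is precisely this pair that must be substituted into the maximization defining $O_M(U)$ so that the pointwise inequality $O_M(U)\ge 1-\abs{\OTOC(U)}$ holds verbatim. Once that is fixed, the proof is just triangle inequality followed by Jensen; no bound on $\abs{\OTOC(U)}$ itself (such as $\abs{\OTOC(U)}\le 1$) is even needed.
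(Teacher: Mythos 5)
Your proposal is correct and is essentially the paper's own proof, just read in the opposite direction: the paper starts from $\delta$ and chains Jensen, the (reverse) triangle inequality, and the pointwise bound $O_M(U)\geq 1-\abs{\OTOC(U)}$ to arrive at the same conclusion. Your bookkeeping remark about fixing the pair $(P_A,P_D)$ inside the maximization defining $O_M$ is exactly the justification the paper invokes for that pointwise bound.
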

\begin{proof}
\begin{equation}
\begin{split}
	\delta
	&=\sqrt{\av{U\sim \mathcal{E}}\abs{\OTOC(U)-\av{V\sim \mathcal{E}}\OTOC(V)}^2}\\
	&\geq\av{U\sim \mathcal{E}}\abs{\OTOC(U)-\av{V\sim \mathcal{E}}\OTOC(V)}\\
	&\geq\av{U\sim \mathcal{E}}\left[\abs{\OTOC(U)}-\abs{\av{V\sim \mathcal{E}}\OTOC(V)}\right]\\
	&\geq1-\av{U\sim \mathcal{E}}O_M(U)-\abs{\av{V\sim \mathcal{E}}\OTOC(V)}.
\end{split}
\end{equation}
In the last inequality, we use the bound: ${O_M(U)=\max_{P_{\vec{a}},P_{\vec{b}}\in \mathcal{P}_2^{\otimes n}}[1-\abs{\OTOC(U)}]\geq 1-\abs{\OTOC(U)}}$. This produces
\begin{equation}
\begin{split}
	\av{U\sim \mathcal{E}}O_M(U)
	&\geq1-\delta-\abs{\av{V\sim \mathcal{E}}\OTOC(V)}.
\end{split}
\end{equation}

\end{proof}

\section{Proof of Theorem~\ref{Theorem:HP}}\label{Sec:ProofThmHP}
Assume that 

\begin{equation}\label{Eq:OTOCAssumption}
	\av{P_A, P_D}\abs{\OTOC(U_{\mathrm{bh}})}=\av{P_A, P_D}\OTOC(U_{\mathrm{bh}})+\eta.
\end{equation}
The OTOC magic is bounded by the decoding fidelity:
\begin{equation}\label{Eq:MagicBoundProof}
\begin{split}
O_M(U_{\mathrm{bh}})
	&=\max_{P_{\vec{a}},P_{\vec{b}}\in \mathcal{P}_2^{\otimes n}}\left[1-\abs{\OTOC(U_{\mathrm{bh}})}\right]\\
	&\geq \av{P_A,P_D}\left[1-\abs{\OTOC(U_{\mathrm{bh}})}\right]\\
	&= 1- \av{P_A,P_D} \OTOC(U_{\mathrm{bh}})-\eta\\
	&= 1- \frac{1}{d^2_A F(U_{\mathrm{bh}})}-\eta.
\end{split}
\end{equation}
In the second line, $\OTOC(U_{\mathrm{bh}})=\langle U^\dagger P_D U P_A U^\dagger P_D U P_A \rangle $. Rewriting the inequality in terms of the decoding fidelity,
\begin{equation}
	F(U_{\mathrm{bh}})\leq \frac{1}{d_A^2(1-O_M(U_{\mathrm{bh}})-\eta)}.
\end{equation}

\section{Proof of Theorem~\ref{Theorem:HPPauliGrowth}} \label{Sec:ProofTheoremHPPAuliGrowth}
In the large $n$ limit, the average OTOC is related to the average Pauli weight via~\cite{Bu2022}
\begin{equation}
	\av{A}\av{P_A \neq I_A}\OTOC(U_{\mathrm{bh}})=1-\frac{4}{3n}W(U^\dagger_{\mathrm{bh}} P_DU_{\mathrm{bh}}).
\end{equation}
Solving for the average over the Pauli weight, 
\begin{equation}
	W(U_{\mathrm{bh}}^\dagger P_DU_{\mathrm{bh}})=\frac{3n}{4}\left[1-\av{A}\av{P_A \neq I_A}\OTOC(U_{\mathrm{bh}})\right].
\end{equation}
The Pauli growth can be bounded via the average OTOC:
\begin{equation}
\begin{split}
	G(U_{\mathrm{bh}})
	&= \max_{\substack{O:  \norm{O}_2=1, W(O)=1,\\\tr{O}=0}}\left[W(U_{\mathrm{bh}}^\dagger O U_{\mathrm{bh}})-1\right]\\
	&\geq  \av{P_D} W(U_{\mathrm{bh}}^\dagger P_D U_{\mathrm{bh}})-1\\
	&= 
	\frac{3n}{4}\left[1-\av{A}\av{P_A \neq I_A}\av{P_D}\OTOC(U_{\mathrm{bh}})\right]-1.
\end{split}
\end{equation}
Rewriting this in terms of the average OTOC,
\begin{equation}\label{Eq:PauliWeightIneq}
\begin{split}
\av{A}\av{P_A\neq I_A}\av{P_D}\OTOC(U_{\mathrm{bh}})\geq 1-\frac{4}{3n}\left(G(U_{\mathrm{bh}})+1\right).
\end{split}
\end{equation}
We next rewrite $\av{P_A \neq I_A}\OTOC(U_{\mathrm{bh}})$ in terms of $\av{P_A}\OTOC(U_{\mathrm{bh}})$:
\begin{equation}
\begin{split}
\av{P_A \neq I_A}\OTOC(U_{\mathrm{bh}})
	&=\frac{1}{d_A^2-1}\sum_{P_A\neq I_A}\OTOC(U_{\mathrm{bh}})\\
	&=\frac{1}{d_A^2-1}\left[\sum_{P_A\neq I_A}\OTOC(U_{\mathrm{bh}})+\sum_{P_A= I_A}(\OTOC(U_{\mathrm{bh}})-\OTOC(U_{\mathrm{bh}}))\right]\\
	&=\frac{1}{d_A^2-1}\left[\sum_{P_A }\OTOC(U_{\mathrm{bh}})-1\right]\\
	&=\frac{1}{d_A^2-1}\left[\frac{d_A^2}{d_A^2}\sum_{P_A }\OTOC(U_{\mathrm{bh}})-1\right]\\
	&=\frac{1}{d_A^2-1}\left[d_A^2\av{P_A }\OTOC(U_{\mathrm{bh}})-1\right].
\end{split}
\end{equation}
Plugging into Ineq.~\eqref{Eq:PauliWeightIneq},
\begin{equation}
\begin{split}
\frac{1}{d_A^2-1}\left[d_A^2\av{A}\av{P_A,P_D }\OTOC(U_{\mathrm{bh}})-1\right]\geq 1-\frac{4}{3n}\left(G(U_{\mathrm{bh}})+1\right).
\end{split}
\end{equation}
Solving for $d_A^2\av{A}\av{P_A,P_D }\OTOC(U_{\mathrm{bh}})$, 
\begin{equation}
\begin{split}
d_A^2\av{A}\av{P_A,P_D }\OTOC(U_{\mathrm{bh}})\geq (d_A^2-1)\left[1-\frac{4}{3n}\left(G(U_{\mathrm{bh}})+1\right)\right]+1.
\end{split}
\end{equation}
Using the relation $\av{P_A, P_D}\OTOC(U_{\mathrm{bh}})=\frac{1}{d_A^2 F(U_{bh})}$,
\begin{equation}
\begin{split}
\av{A}\frac{1}{F(U_{\mathrm{bh}})}\geq (d_A^2-1)\left[1-\frac{4}{3n}\left(G(U_{\mathrm{bh}})+1\right)\right]+1.
\end{split}
\end{equation}

\end{appendix}

\end{document}